\title{\general{Querying in Constant Expected Time with Learned Indexes}}
\author{Luis Alberto Croquevielle}{Imperial College London}{a.croquevielle22@imperial.ac.uk}{https://orcid.org/0009-0002-0101-4431}{}
\author{Guang Yang}{Imperial College London}{guang.yang15@imperial.ac.uk}{https://orcid.org/0000-0001-6456-9077}{}
\author{Liang Liang}{Imperial College London}{liang.liang20@imperial.ac.uk}{https://orcid.org/0000-0002-4566-6178}{}
\authorrunning{Authors}
\author{Ali Hadian}{Imperial College London}{ali.hadian@gmail.com}{https://orcid.org/0000-0003-2010-0765}{}
\author{Thomas Heinis}{Imperial College London}{t.heinis@imperial.ac.uk}{https://orcid.org/0000-0002-7470-2123}{}
\keywords{
    Learned Indexes,
    Expected Time,
    Stochastic Processes,
    Rényi Entropy.
}
\def\expandafter\normalsize\expandafter{%
    \normalsize%
    \setlength\abovedisplayskip{4pt}%
    \setlength\belowdisplayskip{4pt}%
    \setlength\abovedisplayshortskip{-8pt}%
    \setlength\belowdisplayshortskip{2pt}%
}
\tikzset{
    JB tick/.style={
        draw=blue,
        fill=blue,
        rectangle,
        minimum width=1pt,
        minimum height=10mm,
        inner sep=0pt
    },
    J tick/.style={
        draw=blue,
        fill=blue,
        rectangle,
        minimum width=1pt,
        minimum height=5mm,
        inner sep=0pt
    },
    I tick/.style={
        draw=red,
        fill=red,
        rectangle,
        minimum width=1pt,
        minimum height=3mm,
        inner sep=0pt
    },
    IB tick/.style={
        draw=red,
        fill=red,
        rectangle,
        minimum width=1pt,
        minimum height=6mm,
        inner sep=0pt
    }
}
\definecolor{ForestGreen}{RGB}{34,139,34}
\newcommand{\general}[1]{{ #1}}
\newcommand{\probanalysis}[1]{{ #1}}
\newcommand{\reviewerone}[1]{{ #1}}
\newcommand{\reviewertwo}[1]{{ #1}}
\DeclareRobustCommand{\svdots}{
  \vbox{%
    \baselineskip=0.3\normalbaselineskip
    \lineskiplimit=0pt
    \hbox{.}\hbox{.}\hbox{.}%
    \kern-0.9\baselineskip
  }%
}
\newtheorem{fact}{Fact}
\newcommand{\densityx}{f}
\newcommand{\densityq}{g}
\newcommand{\rhoh}{\rho}
\newcommand{\rhohf}{\rhoh_{\scriptscriptstyle \densityx}}
\newcommand{\rhohg}{\rhoh_{\densityq}}
\newcommand{\rhohat}{\hat{\rho}_\densityx}
\newcommand{\rank}{\mathbf{rank}}
\newcommand{\rankh}{\mathbf{\hat{r}}}
\newcommand{\indf}{\mathbf{1}}
\DeclarePairedDelimiter\norm{\lVert}{\rVert}
\DeclarePairedDelimiter{\ceil}{\lceil}{\rceil}
\newcommand{\osx}[1]{X_{(#1)}}
\newcommand{\prob}[1]{\mathbb{P}\left(#1\right)}
\newcommand{\expec}[1]{\mathbb{E}\left[#1\right]}
\renewcommand{\phi}{\varphi}
\newcommand{\eps}{\varepsilon}
\newcommand{\spacen}{\overline{S}_n}
\newcommand{\timen}{\overline{T}_n}
\begin{document}

\maketitle

\begin{abstract}
    Learned indexes leverage machine learning models to accelerate query answering in databases, showing impressive practical performance. However, theoretical understanding of these methods remains incomplete. Existing research suggests that learned indexes have superior asymptotic complexity compared to their non-learned counterparts, but these findings have been established under restrictive probabilistic assumptions. Specifically, for a sorted array with $n$ elements, it has been shown that learned indexes can find a key in $O(\log(\log n))$ expected time using at most linear space, compared with $O(\log n)$ for non-learned methods.
    
    In this work, we prove $O(1)$ expected time can be achieved with at most linear space, thereby establishing the tightest upper bound so far for the time complexity of an asymptotically optimal learned index. Notably, we use weaker probabilistic assumptions than prior \general{research}, meaning our \general{work generalizes} previous \general{results}. Furthermore, we introduce a new measure of statistical complexity for data. This metric exhibits an information-theoretical interpretation and can be estimated in practice. This characterization provides further theoretical understanding of learned indexes, by helping to explain why some datasets seem to be particularly challenging for these methods.
\end{abstract}

\section{Introduction}
\label{sec:1}
Query answering is of central importance in database systems, and developing efficient algorithms for this task is a key research problem. These algorithms often use an index structure to access records faster, at the cost of higher space usage to store the index. Arguably, the most common scenario in query answering involves point queries, where the aim is to retrieve all tuples where an attribute has an exact value \cite{gonzalez2005practical, lloyd2017near}. Also common are range queries, which ask for all tuples where an attribute is within a given interval.

Solutions based on hash tables can answer point queries efficiently \cite{cormen2022introduction}, but are ineffective in handling range queries. To generalize well to these cases, the relevant attribute is usually stored as a sorted array $A$. In this configuration, answering point and range queries essentially comes down to finding an element in a sorted array. A point query for value $q$ can be answered by finding the position of $q$ in the array if it exists. A range query matching an interval $[q, q']$ can be answered by finding the first element bigger than or equal to $q$, and then scanning the array $A$ from that position until an element greater than $q'$ is found, or the array ends.

Since the array $A$ is sorted, binary search can be used to find $q$ or the first element greater than $q$. If $A$ has $n$ elements (also called \textit{keys}), binary search takes $O(\log n)$ operations. This is asymptotically optimal in a random-access machine (RAM) model of computation, for comparison-based searching \cite{knuth1998art}. However, other index structures such as B+Trees \cite{graefe2011modern, levandoski2013bw} are preferred in practice. They have the same asymptotic complexity but make use of hardware properties to improve performance.

It was suggested in \cite{kraska2018case} that classical index structures can be understood as models that take a key as an input and predict its position in the sorted array. From this perspective, indexes such as B+Tree are potentially suboptimal because they do not take advantage of patterns in the data to improve performance. This has led to the introduction of learned indexes \cite{kraska2018case, kipf2020radixspline, ding2020alex, ferragina2020pgm, galakatos2019fiting}, which use machine learning models to predict the position of a key.

Learned indexes usually combine predictive and corrective steps, as illustrated in Figure \ref{fig:prediction-step-correction-step}. In the predictive step, a machine learning model is used to estimate the position of a key. Since this estimation may be wrong, a corrective step finds the exact position. Usually, this is done by searching around the position estimated by the model. Intuitively, the improvement in performance comes from searching over a small range, determined by the prediction error, rather than searching over the whole array. Hence, if the prediction error is small, a learned index can in principle improve upon classical methods. Most learned indexes use simple machine learning models, such as piecewise linear functions, which are fast to compute.

\begin{figure}
    \centering
    \includegraphics[width=0.5\linewidth]{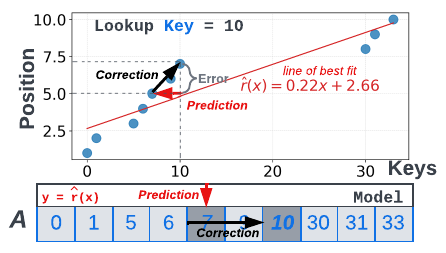}
    \caption{Predictive and corrective steps of a learned index.}
    \label{fig:prediction-step-correction-step}
    \vspace{-15pt}
\end{figure}

While learned indexes exhibit impressive experimental performance \cite{ding2020alex, galakatos2019fiting, hadian2021shift, kipf2020radixspline}, their theoretical understanding is still at an early stage. Existing research shows that specific learned indexes can achieve better asymptotic space or time complexity than traditional structures \cite{zeighami2023distribution, ferragina2020learned}, but these results rely on disparate and potentially restrictive assumptions. Furthermore, some datasets are notoriously challenging for learned indexes \cite{marcus2020benchmarking}, and no theoretical 
analysis has yet offered an adequate explanation for this.

In this work, we analyze the asymptotic expected complexity of learned indexes. Specifically, we prove that learned indexes can achieve constant expected query time with~at most linear space, constituting the best upper bound so far for the performance of an asymptotically optimal learned index. Our proof is constructive, meaning we provide a specific index that achieves this complexity.
\reviewertwo{This type of index has been considered before~\cite{zeighami2023distribution}, but we contribute tighter complexity bounds than previously available.}
Compared to prior research, our complexity bounds hold under a more general probabilistic model, and in particular, our results do not require any independence assumptions for the data generating process.

We also address the question of why certain datasets are challenging for learned indexes. It has been suggested that, in those cases, the data distribution is inherently hard to learn \cite{marcus2020benchmarking}, and several metrics have been proposed to capture this difficulty \cite{zeighami2023distribution, ferragina2020learned, wongkham2022updatable}. We introduce a new measure of data complexity $\rhohf$ with better mathematical properties. Our metric is related to Rényi entropy in information theory, helping to provide a theoretical grounding for $\rhohf$. Also, $\rhohf$ can be statistically estimated, making it potentially useful in practice to predict \textit{a priori} how well a learned index will perform. Our experiments support this idea.

\section{Preliminaries}
\label{sec:2}
\subsection{Data}
Learned indexes use statistical models. Therefore, theoretical analysis of learned indexes is done under a probabilistic model of some kind \cite{zeighami2023distribution, ferragina2020learned}. In general, keys are assumed to come from a data generating process with specified properties. For instance, in \cite{zeighami2023distribution} it is assumed that the keys are independent and identically distributed, while \cite{ferragina2020learned} uses a completely different model. We strive to generalize the probabilistic setting as much as possible. 

Formally, we model a database attribute $X$ as a stochastic process $X = \{X_i\}_{i \in \mathbb{N}}$. We assume all random variables $X_i$ have the same cumulative distribution function $F$. For any time $n\in\mathbb{N}$, we can observe the data $X_1, X_2, \ldots, X_n$ generated so far. Sorting the $\{X_i\}_{i=1}^n$ gives rise to a new set of random variables, usually called the \textit{order statistics} \cite{david2004order}, which we denote $\osx{1}, \osx{2},\ldots, \osx{n}$. We consider the sorted array $A$ at time $n$ to be formed by this last group of random variables, such that $A[i]=\osx{i}$ for all $i=1,\ldots,n$. Duplicate keys are possible in principle, and do not constitute a problem. In that case, $\osx{i+1}=\osx{i}$ for some $i$ and all analysis remains the same.

That way, in our setting a stochastic process generates the data in attribute $X$. For each $n\in\mathbb{N}$, there is a sorted array $A$ with the first $n$ keys from this process, and a learned index can be built to search this array. For now, we do not assume anything about $F$. Also, we do not assume independence of the $\{X_i\}_{i=1}^n$ variables. In Sections \ref{sec:3} and \ref{sec:4}, we analyze our results under certain probabilistic assumptions. In Section \ref{sec:5}, we compare our model and results to those of prior work.

\subsection{Learning Problem}
Assume array $A$ has $n$ keys. The learning problem can be formalized by the introduction of a $\rank$ function. For any number $q\in\mathbb{R}$, $\rank_A(q)$ is defined as the number of elements in $A$ that are smaller than or equal to $q$, that is
\begin{equation*}
    \rank_A(q)
    =
    \sum_{i=1}^n
    \indf_{\osx{i}\leq q},
\end{equation*}
where $\mathbf{1}$ is the indicator function. We omit the subscript $A$ if it is clear from context. For a point query with value $q$, $\rank(q)$ gives the position of $q$ in the array if it exists. For a range query matching an interval $[q, q']$, the relevant records are found by scanning the array $A$, starting at position $\rank(q)$ and ending when an element greater than $q'$ is found. Since $\rank$ is sufficient to answer these queries, we focus on the problem of learning the $\rank$ function under specific probabilistic assumptions.

\subsection{Model of Computation}

The model of computation we use
is the Random Access Machine (RAM) under the uniform cost criterion \cite{aho1974design}. In this model, each instruction requires one unit of time and each register (storing an arbitrary integer) uses one unit of space. This is in contrast to part of the existing literature. In \cite{zeighami2023distribution} the computation model is such that $O(\log n)$ space units are needed to store an integer $n$. On the other hand, in \cite{ferragina2020learned} an I/O model of computation is used \cite{liu2009encyclopedia}, which assumes the existence of an external memory (e.g., disk) from which data is read and written to. Under this model, time complexity consists only of the number of I/O operations.

We argue the RAM model under the uniform cost criterion is better suited for analysing learned indexes than either of these two. On the one hand, numbers are usually represented as data types with a fixed number of bits (e.g., \texttt{long int} in C++). On the other hand, most learned indexes work in main memory and avoid the use of disk storage. To make results comparable, in Section \ref{sec:5} we restate all previous complexity results using our RAM model.

Regarding space complexity, we take the standard approach of only counting the redundant space introduced by the index, and not the space necessary to store the array $A$ in the first place \cite{ferragina2020pgm}. Regarding time complexity, we do not consider the time needed to train
the model, just for finding the keys.
We focus the asymptotic analysis on \textit{expected} time, where the expectation is taken over the realizations of the stochastic process and the choice of query.


For any $n\in\mathbb{N}$, let $R_n$ be a procedure that takes the sorted array $A = [\osx{1}, \ldots, \osx{n}]$ as input, and outputs an index structure $R_n(A)$. Denote by $S(R_n(A))$ the space overhead of $R_n(A)$. Also, for any $q\in\mathbb{R}$ denote by $T(R_n(A),q)$ the query time for $q$ over array $A$ when using the index $R_n(A)$. We are interested in bounds for $S$ and $T$ that do not depend on $A$ or $q$. With that in mind, we denote
\begin{equation*}
    \spacen
    = \sup_{A = [\osx{1}, \ldots, \osx{n}]}
    S(R_n(A)),
    \quad
    \timen
    = \expec{T(R_n(A), q)},
\end{equation*}
\general{where $\mathbb{E}$ denotes expected value.}
In Section \ref{sec:3-complexity} we state asymptotic results for $\spacen$ and $\timen$.

\section{Main Results and ESPC Index}
\label{sec:3}
\subsection{Complexity Bounds}
\label{sec:3-complexity}

We mentioned above that all the $X_i$ share the same cumulative distribution function. Formally, we consider that the $X_i$ are defined in a probability space $(\Omega,\mathcal{F},\mathbb{P})$, such that $F_{X_i}(x) \coloneqq \prob{X_i\leq x}$ is the same for all $i$, and is denoted by $F$. Now, assume $F$ can be characterized by a square-integrable density function $\densityx$ and define $\rhohf$ as:
\begin{equation*}
    \rhohf
    =\norm{\densityx}_2^2
    =\int_\mathbb{R} \densityx^2(x)dx.
\end{equation*}
\reviewerone{
    Moreover, assume $f$ has bounded support $[a, b]$, that is, $f(x)=0$ for all $x\notin [a, b]$. This can be a reasonable assumption in many practical applications \cite{ma2010coding,ma2014variational} where there are known bounds to the values of data: for instance, for storing medical information such as test results, or personal financial information. Our results can be extended to the case of $f$ with unbounded support, by incurring a penalty on space usage. In Section \ref{sec:4-unbounded-support} we show how this penalty relates to the tails of the probability distribution.
}

The main complexity results are stated in Theorems \ref{thm:constant-time} and \ref{thm:loglog-time}. Both are attained by the ESPC index, which we introduce in Section \ref{sec:3-espc-index}.
\reviewerone{Both theorems as stated here assume that the query parameter follows the same probability distribution as the keys. This is done for ease of presentation.}
In Section \ref{sec:4-query-parameter} we prove that these results can be generalized to include the case when that assumption does not hold.

\begin{theorem}
    \label{thm:constant-time}
    \reviewerone{
        Suppose $f$ has support $[a, b]$ and $\rhohf < \infty$. Define  $\rho=\log\big((b-a)\rhohf\big)$. Then, there is a procedure $R_n$ for building learned indexes such that $\spacen=O(n)$ and $\timen=O(\rho)$. That is, for array $A$ with $n$ keys an index can be built with space overhead $O(n)$ and expected query time $O(\rho)$. Since $\rho$ is independent of $n$, expected time is asymptotically $O(1)$.
    }
\end{theorem}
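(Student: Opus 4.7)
The plan is to construct an index, which I will call ESPC, that partitions the support $[a,b]$ into $n$ equal-width buckets $I_1,\ldots,I_n$, each of width $w=(b-a)/n$. For each bucket $I_i$ the index stores the starting position in $A$ of its first key (and equivalently the bucket's key count), requiring a table of $O(n)$ integers, so $\spacen=O(n)$. Given a query $q\in[a,b]$, the containing bucket $b(q)=\lceil(q-a)/w\rceil$ is obtained in $O(1)$ arithmetic operations under the RAM model, after which $\rank(q)$ is recovered by binary searching the contiguous block of keys lying in $I_{b(q)}$. Writing $N_i$ for the number of keys in $I_i$, the per-query cost is therefore $O\!\bigl(\log(1+N_{b(q)})\bigr)$.

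For the time bound I would first observe that, without any independence among the $\{X_j\}$, linearity of expectation and the common marginal density give $\expec{N_i}=n\int_{I_i}\densityx(x)\,dx$. Using that the query follows density $\densityx$ and is independent of the data (as the statement assumes for ease of presentation),
\begin{equation*}
    \expec{N_{b(q)}}
    \;=\; \sum_{i=1}^{n}\prob{q\in I_i}\,\expec{N_i}
    \;=\; n\sum_{i=1}^{n}\left(\int_{I_i}\densityx\right)^{\!2}.
\end{equation*}
Cauchy--Schwarz applied inside each bucket yields $\bigl(\int_{I_i}\densityx\bigr)^2\leq w\int_{I_i}\densityx^{2}$, and summing over $i$ gives $\expec{N_{b(q)}}\leq n\cdot w\cdot\rhohf=(b-a)\rhohf$. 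Since $\log(1+\cdot)$ is concave, Jensen's inequality then produces
\begin{equation*}
    \timen \;=\; O\!\bigl(\expec{\log(1+N_{b(q)})}\bigr)
    \;\leq\; O\!\bigl(\log(1+(b-a)\rhohf)\bigr)
    \;=\; O(\rho),
\end{equation*}
which is independent of $n$ and therefore asymptotically $O(1)$, as claimed.

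The step I expect to carry the most weight is the Cauchy--Schwarz bucket estimate, which is where the $L^2$ norm of $\densityx$ (and thus $\rhohf$) naturally appears and where equal-width partitioning is essential; an unequal partition would not couple as cleanly to $\rhohf$. A secondary subtlety is verifying that no independence among the $\{X_j\}$ is used: the identity $\expec{N_i}=n\int_{I_i}\densityx$ depends only on the marginals, so the whole calculation transfers to the dependent setting, and only the independence between $q$ and the data (implicit in the simplified statement) is invoked. Extending the argument to queries with a different distribution, and to densities with unbounded support, is what Section~\ref{sec:4-query-parameter} and Section~\ref{sec:4-unbounded-support} will eventually handle.
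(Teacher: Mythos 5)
Your argument is correct, but it reaches the bound by a genuinely different (and in one respect weaker) construction than the paper's. The paper's ESPC index partitions the \emph{observed} range $[\osx{1},\osx{n}]$ into $K=n$ equal-length cells, stores a midpoint rank estimate $\hat{r}_k$ per cell, and corrects with exponential search. Because those cells are data-dependent, the event $\{q\in I_k\}$ and the cell counts are coupled, so the paper must introduce a second, fixed partition $\{J_\ell\}$ of $[a,b]$, show each $I_k$ meets at most three consecutive $J_\ell$ (Fact~\ref{lemma:intervals-subintervals}), and absorb a factor $3$ in Theorem~\ref{thm:bound-error-K} before applying the same Cauchy--Schwarz step $\bigl(\int_{J_\ell}\densityx\bigr)^2\leq |J_\ell|\int_{J_\ell}\densityx^2$ that you use. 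You sidestep all of that by placing the $n$ buckets directly on $[a,b]$: the buckets are then deterministic, $\expec{N_{b(q)}}=n\sum_i\bigl(\int_{I_i}\densityx\bigr)^2$ follows at once from the independence of $q$ and the data (an assumption the paper also uses implicitly in its conditional-expectation step), and no auxiliary partition or factor of $3$ is needed. The price is that your builder must know $a$ and $b$; the paper's formal setup has $R_n$ see only the array $A$, and it explicitly presents dispensing with knowledge of $[a,b]$ as an advantage of the ESPC index over the PCA index of \cite{zeighami2023distribution}, which is essentially your design. So you have a cleaner derivation of a slightly less general construction --- the paper's extra machinery exists precisely to buy distribution-agnostic bucket placement. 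The remaining ingredients ($\expec{N_i}=n\int_{I_i}\densityx$ needing only identical marginals and no independence among the $X_j$, Jensen via concavity, per-query cost logarithmic in the bucket occupancy) match the substance of Lemma~\ref{lemma:rank-error-bound-number-keys}, Theorem~\ref{thm:bound-error-K}, and Proposition~\ref{prop:bound-error-rho}.
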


\begin{theorem}
    \label{thm:loglog-time}
    \reviewerone{
        Suppose $f$ has support $[a, b]$ and $\rhohf < \infty$. Define  $\rho=\log\big((b-a)\rhohf\big)$. Then, there is a procedure $R_n$ for building learned indexes such that $\spacen=O\big(\frac{n}{\log n}\big)$ and $\timen=O(\rho + \log(\log n)))$. That is, for array $A$ with $n$ keys an index can be built with space overhead $O\big(\frac{n}{\log n}\big)$ and expected query time $O(\rho + \log(\log n))$. Since $\rho$ is independent of $n$, 
        expected time is asymptotically $O(\log(\log n))$.
    }
\end{theorem}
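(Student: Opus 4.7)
The plan is to adapt the ESPC index construction from Section \ref{sec:3-espc-index} (used to prove Theorem \ref{thm:constant-time}), coarsening its bucket resolution by a factor of $\log n$. Specifically, I would partition the support $[a,b]$ into $m = \Theta(n/\log n)$ equal-width buckets and, for each bucket, store the starting position in $A$ of the keys falling into it. The space usage is then immediately $\spacen = O(m) = O(n/\log n)$, matching the required bound.

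For the expected query time, I would split the cost of a query $q$ into two parts. First, locating the bucket $I(q)$ containing $q$ is done in $O(1)$ time by computing $\lfloor (q-a)m/(b-a)\rfloor$. Second, once the bucket is identified, binary search within it costs $O(\log(1+N_{I(q)}))$, where $N_i$ denotes the number of keys in bucket $i$. Hence
\begin{equation*}
    \timen = O(1) + \expec{\log(1+N_{I(q)})},
\end{equation*}
and the core of the proof reduces to bounding the second term by $O(\rho + \log\log n)$.

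To bound $\expec{\log(1+N_{I(q)})}$, I would first condition on $q$ and, using its independence from the $X_i$, rewrite the expectation as $\sum_{i=1}^m p_i\,\expec{\log(1+N_i)}$, with $p_i = \prob{q \in I_i}$. Applying Jensen's inequality twice --- first to pull $\expec{\cdot}$ inside $\log(1+\cdot)$ per bucket, then to pull the convex combination $\sum_i p_i$ inside the outer logarithm (valid because $\log$ is concave and $\sum_i p_i = 1$) --- gives
\begin{equation*}
    \expec{\log(1+N_{I(q)})} \le \log\Big(1+\sum_{i=1}^m p_i\, \expec{N_i}\Big).
\end{equation*}
Since $q$ is distributed according to $\densityx$, the inner sum equals $n\sum_i \big(\int_{I_i}\densityx(x)\,dx\big)^2$. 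A per-bucket Cauchy--Schwarz bound with bucket width $w=(b-a)/m$ gives $\big(\int_{I_i}\densityx\big)^2 \le w\int_{I_i}\densityx^2$, so the sum is at most $\frac{n(b-a)}{m}\rhohf = O((b-a)\rhohf \log n)$. Taking the logarithm yields $O(\rho+\log\log n)$, as required.

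The step I expect to be the main obstacle is justifying the two Jensen applications carefully --- particularly the second one, which requires that the bucket probabilities $p_i$ sum to $1$ (the bounded-support assumption on $\densityx$ is what makes this possible, since then all query mass lies inside the grid). A secondary subtlety is correctly carrying the $+1$ terms throughout so that empty or near-empty buckets do not distort the bound, and making explicit the independence of $q$ from the $X_i$'s at the step where the expectations are separated.
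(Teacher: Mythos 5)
Your proof is correct, and its mathematical core is the same as the paper's: the expected in-bucket search cost is controlled by the collision sum $n\sum_i p_i^2$, which a per-bucket Cauchy--Schwarz bounds by $\frac{(b-a)}{m}\rhohf\, n = O\big((b-a)\rhohf\log n\big)$ for $m=\Theta(n/\log n)$ buckets, and Jensen's inequality converts this into an $O(\rho+\log\log n)$ bound on the expected logarithm. The paper proves Theorem \ref{thm:loglog-time} by exactly this pipeline (Theorem \ref{thm:bound-error-K}, Proposition \ref{prop:bound-error-rho}, then Jensen), instantiated with $K=n/\log n$ subintervals of the ESPC index. The one genuine difference is your choice of grid: you partition the known support $[a,b]$ into equal-width buckets, whereas the ESPC index partitions the data-dependent range $[\osx{1},\osx{n}]$. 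Your choice buys a cleaner analysis --- your buckets coincide with the fixed cells $\{J_\ell\}$ over which $p_\ell=\prob{X_i\in J_\ell}$ is defined, so you avoid Fact \ref{lemma:intervals-subintervals} and the factor $3$ the paper incurs from a data-dependent bucket straddling up to three fixed cells. The cost is that your construction needs $a$ and $b$ as inputs, an assumption the paper explicitly dispenses with (it is one of its stated advantages over the PCA index of \cite{zeighami2023distribution}); under the hypotheses of the theorem as stated this is harmless, but it is a weaker construction in the paper's own terms. The two obstacles you flagged are indeed fine: $\sum_i p_i=1$ holds because $\densityx$ is supported on $[a,b]$, and independence of $q$ from the $X_i$ is also implicitly assumed in the paper's proof, which writes $\expec{\eps \mid X_1,\ldots,X_n}=\int \eps(q)\densityx(q)\,dq$. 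Note finally that your step $\sum_i p_i\expec{N_i}=n\sum_i p_i^2$ uses only that the $X_i$ have identical marginals, not that they are independent, so you preserve the paper's level of generality.
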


\general{We prove Theorems \ref{thm:constant-time} and \ref{thm:loglog-time} in Appendix \ref{sec:appendix-2}, as corollaries of a more general result.}


\subsection{Equal-Split Piecewise Constant Index}
\label{sec:3-espc-index}

All complexity results in Section \ref{sec:3-complexity} are achieved by the \textit{Equal-Split Piecewise Constant} (ESPC) index, which we now describe. As explained in Section \ref{sec:1}, learned indexes usually combine predictive and corrective steps. The ESPC index follows this idea, and is remarkably simple to define and implement.
\reviewertwo{It is important to mention that a similar design, the PCA Index, was analyzed in \cite{zeighami2023distribution}, but the complexity bounds we prove for this type of index are tighter and more general. Moreover, there are some design differences. First, the ESPC index uses exponential search for the corrective step, while the PCA Index uses binary search. More importantly, the PCA Index presumes knowledge of the interval $[a, b]$ where the distribution is supported, a potentially restrictive assumption that we dispense with for the ESPC index.
}

The general idea is to learn a function $\rankh$ to approximate the $\rank$ function up to some error. Then $\rank$ can be computed exactly by first evaluating $\rankh$ and then correcting the error by use of an exponential search algorithm \cite{BENTLEY197682}. If the prediction error is small, this gives a procedure for computing $\rank$ exactly in low expected time. The approximator function $\rankh$ is built as a piecewise constant function defined over equal-length subintervals. This equal-length property ensures that $\rankh$ can be evaluated in constant time.


We now describe how the approximator $\rankh$ is defined. Take an array $A$ formed by the sorted keys $\osx{1},\ldots,\osx{n}$. For a positive integer $K$, divide the range
\general{$[\osx{1},\osx{n}]$}
into $K$ subintervals of equal length, which we denote by $\{I_k\}_{k=1}^K$. Figure \ref{fig:equal-split-graph} illustrates this type of partition for a dataset with 40 keys and $K=4$. Formally, for each $k=1,\ldots,K$ the $k$-th subinterval is defined as $I_k=[t_{k-1}, t_k]$ where
\begin{equation*}
    \general{
        t_k = \osx{1} + k\delta
        \>\text{ with }\>
        \delta = \frac{\osx{n}-\osx{1}}{K}.
    }
\end{equation*}

\begin{figure}
    \centering
    \includegraphics[width=0.8\linewidth]{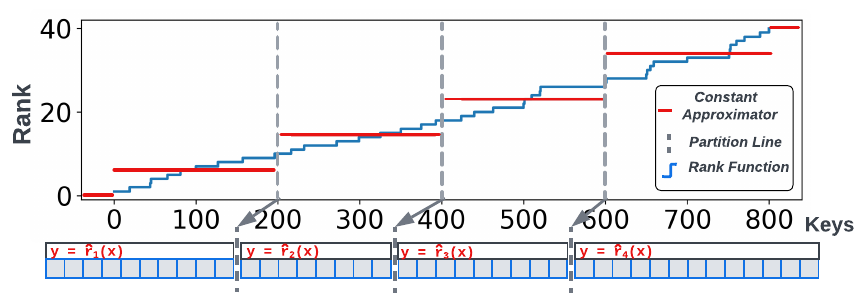}
    \vspace{-10pt}
    \caption{Partition of key range into four equal-length intervals, with approximator of $\rank$.}
    \label{fig:equal-split-graph}
    \vspace{-10pt}
\end{figure}

\general{Notice that all intervals have the same length, but not necessarily the same number of keys. Now, we want $\rankh$ to be defined as a piecewise constant function. If $q\in\mathbb{R}$ does not belong to any $I_k$, this means that either
$q<\osx{1}$ or $q>\osx{n}$, so that $\rank(q)=0$ or $\rank(q)=n$, respectively. On the other hand, for each subinterval $I_k$ the ESPC index stores a constant value $\hat{r}_k$ which approximates $\rank$ over $I_k$. Combining these ideas, we define $\rankh$ as}
\begin{equation*}
    \rankh(q) =
    \begin{cases}
        0         & \text{if } \general{q <   \osx{1}}      \\
        \hat{r}_1 & \text{if } q \in I_1    \vspace{-5pt} \\
                  & \svdots                 \\
        \hat{r}_K & \text{if } q \in I_K    \\
        n         & \text{if } \general{q >   \osx{n}}.
    \end{cases}
\end{equation*}

Figure \ref{fig:equal-split-graph} represents this type of piecewise approximator over equal-length subintervals. It remains to define the $\{\hat{r}_k\}_{k=1}^K$ values.
\general{
    Since $\rank$ is an increasing function, for all $q\in I_k$ it holds $\rank(t_{k-1}) \leq \rank(q) \leq \rank(t_{k})$. This motivates a candidate for $\hat{r}_k$ in the form of
}
\begin{equation*}
    \hat{r}_k
    = \frac{1}{2}\left(\rank(t_{k-1}) + \rank(t_{k})\right)
    = \rank(t_{k-1}) + \frac{n_k}{2},
\end{equation*}
where $n_k=\rank(t_{k}) - \rank(t_{k-1})$ is the number of keys in $I_k$. As we prove in Section \ref{sec:4-preliminary}, this means that $\hat{r}_k$ can approximate $\rank(q)$ for any $q\in I_k$ with error at most $n_k/2$, so that the quality of $\rankh$ as an approximator depends on the expected number of keys per interval.
Algorithm \ref{alg:espc-construction} details the index construction, based on our definitions for $\{I_k\}$ and $\{\hat{r}_k\}$.

\SetNlSty{text}{}{.}
\begin{algorithm}
    \SetNoFillComment
    \caption{Construction of ESPC index}
    \label{alg:espc-construction}
    \KwIn{Sorted array $A$, positive integer $K$}
    \KwOut{ESPC index for $A$ with $K$ equal-length subintervals}
    \nl $n \gets \text{length}(A)$ \\
    \nl \general{$\delta \gets (\osx{n}-\osx{1}) / K$} \\
    \nl $N \gets \text{array of length } K$ with all zeros \\
    \nl \tcp{Compute number of keys in each subinterval, store in $N$}
    \nl \For{$i=1$ \KwTo $n$}{
        \general{\nl $k \gets \ceil*{\frac{1}{\delta}(A[i] - \osx{1})}$} \\
        \nl $N[k] \gets N[k] + 1$ \\
    }
    \nl $r \gets \text{array of length } K$ \\
    \nl $r[1] \gets \frac{1}{2}N[1]$ \\
    \nl \tcp{Compute rank estimator $\hat{r}_k$ for each subinterval, store in $r$}
    \nl \lFor{$k=2$ \KwTo $K$}{
        $r[k] \gets r[k-1] + \frac{1}{2}N[k]$
    }
    \nl \KwRet{$R=(r, \delta)$}
\end{algorithm}
\vspace{-10pt}

The approximator $\rankh$ can be evaluated in constant time for any $q\in\mathbb{R}$. If
\general{$q<\osx{1}$}
then $\rankh(q)=0$, and if
\general{$q>\osx{n}$}
then $\rankh(q)=n$. If neither is true, then $q$ belongs to some interval $I_k$. In that case, $\rankh(q)=\hat{r}_k$ where $k$ is given by
\general{
    $k = \ceil*{K\frac{q-\osx{1}}{\osx{n}-\osx{1}}} = \ceil*{\frac{1}{\delta}(q-\osx{1})}.$
}
To find $\rank(q)$ exactly, the ESPC index then performs an exponential search on array $A$ around position $\ceil*{\rankh(q)}$. We prove in Section \ref{sec:4} that the expected value of the approximation error $|\rank(q)-\rankh(q)|$ is low enough to
establish the results of Section \ref{sec:3-complexity}. Algorithm \ref{alg:espc-evaluation} formalizes this evaluation process. It assumes the existence of a procedure {\normalfont \textsc{ExponentialSearch}}$(A,i,q)$ which uses exponential search around index $i$ to find the position of $q$ in array $A$. If $q$ is not in $A$, it returns the position of the greatest key lower than $q$.

\SetNlSty{text}{}{.}
\begin{algorithm}
    \SetNoFillComment
    \caption{Evaluation of ESPC index}
    \label{alg:espc-evaluation}
    \KwIn{Sorted array $A$, $q\in\mathbb{R}$, ESPC index $R=(r, \delta)$}
    \KwOut{Exact value of $\rank(q)$}
    \nl \lIf{\general{$q < \osx{1}$}}{\KwRet{$0$}}
    \nl \lElseIf{\general{$q > \osx{n}$}}{\KwRet{$n$}}
    \nl \Else{
    \nl     \general{$k \gets \ceil*{\frac{1}{\delta}(q-\osx{1})}$} \\
    \nl     $i \gets \ceil*{r[k]}$ \label{alg:espc-evaluation-step-5} \\
    \nl     \KwRet{{\normalfont \textsc{ExponentialSearch}}$(A,i,q)$} \label{alg:espc-evaluation-step-6}
    }
\end{algorithm}
\vspace{-10pt}

\section{Complexity Proofs and Analysis}
\label{sec:4}
\subsection{Preliminary Results}
\label{sec:4-preliminary}

We now turn to an analysis of the space and time complexity of the ESPC index. In the following, assume the sorted array $A$ has $n$ elements and the index is built with $K$ subintervals $\{I_k\}_{k=1}^K$. As seen from Algorithm \ref{alg:espc-construction}, an ESPC index can be represented as a tuple $R=(r, \delta)$ where $r$ is an array of length $K$ storing the approximators $\{\hat{r}_k\}_{k=1}^K$ and $\delta$ is the length of the subintervals. For $q\in\mathbb{R}$ recall that $\rankh(q)$ is the approximation for $\rank(q)$ before the exponential search (see Algorithm \ref{alg:espc-evaluation}).

In the RAM model with the uniform cost criterion, each stored value uses $1$ unit of space. Hence, the space used to store an ESPC index $R=(r, \delta)$ is $O(K)$. Additionally, as seen from Algorithm \ref{alg:espc-evaluation}, the evaluation for any $q\in\mathbb{R}$ takes constant time, plus the exponential search. An exponential search takes time $O(2\log \eps)=O(\log \eps)$ \cite{BENTLEY197682}, where $\eps$ is the difference between the starting position and the correct position. This is summarized in the following result.

\begin{proposition}
\label{prop:espc-basic-complexity}
    Let $A$ be a sorted array with $n$ entries, and $R$ be an ESPC index for $A$ with $K$ subintervals. Then, $R$ uses $O(K)$ space and given $q\in\mathbb{R}$, the ESPC index $R$ can be used to find $\rank(q)$ in $O(\log\eps)$ time, where $\eps=|\rank(q)-\rankh(q)|$.
\end{proposition}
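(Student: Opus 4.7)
The proposition breaks naturally into two independent claims, so I would handle the space bound and the time bound separately.

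For the space claim, I would appeal directly to the data structure's definition. An ESPC index, as built by Algorithm~\ref{alg:espc-construction}, is the pair $R=(r,\delta)$, where $r$ is an array of length $K$ and $\delta$ is a single scalar. Under the RAM model with the uniform cost criterion, each of these stored values occupies one unit of space, so the total overhead is $K+1=O(K)$. Nothing else needs to be checked here.

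For the time claim I would walk through Algorithm~\ref{alg:espc-evaluation}. Steps 1--5 consist only of a constant number of comparisons, arithmetic operations, a ceiling, and one array look-up $r[k]$, each of which is $O(1)$ in the RAM model. The only non-constant contribution is the call to \textsc{ExponentialSearch}$(A,i,q)$ in step~6. Invoking the standard exponential search bound from \cite{BENTLEY197682}, the cost of that call is $O(\log d)$, where $d=|\rank(q)-i|$ is the distance between the starting index $i=\ceil*{\rankh(q)}$ and the true position $\rank(q)$.

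The one subtle step, and the place where I would spend a couple of lines, is showing that $\log d = O(\log\eps)$ with $\eps=|\rank(q)-\rankh(q)|$. Since $i=\ceil*{\rankh(q)}$, we have $|i-\rankh(q)|<1$, hence by the triangle inequality
\[
d = |\rank(q)-i| \le |\rank(q)-\rankh(q)| + |\rankh(q)-i| < \eps + 1.
\]
Thus $\log d < \log(\eps+1) = O(\log\eps)$, giving the desired $O(\log\eps)$ bound on the overall query time. The edge cases $q<\osx{1}$ and $q>\osx{n}$ return immediately in $O(1)$ time, so they trivially fit the bound (with $\eps=0$ handled by the convention used throughout the paper). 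The main obstacle, such as it is, is ensuring that the $+1$ from the ceiling does not change the asymptotics; everything else is direct bookkeeping from the definitions of the index and the algorithm.
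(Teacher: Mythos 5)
Your proof is correct and follows essentially the same route as the paper, which likewise reads off the $O(K)$ space bound from the stored pair $(r,\delta)$ under the uniform cost criterion and charges the query time entirely to the exponential search via the $O(2\log\eps)=O(\log\eps)$ bound from \cite{BENTLEY197682}. Your extra step bounding the discrepancy introduced by the ceiling $i=\ceil*{\rankh(q)}$, showing $|\rank(q)-i|<\eps+1$, is a small refinement the paper glosses over (it simply identifies $\eps$ with the distance from the starting position), but it does not change the argument.
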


In light of Proposition \ref{prop:espc-basic-complexity}, a way to estimate the time complexity of the ESPC index is to bound the expected value of $\eps$ as a function of $K$. Hence, our focus is on finding an expression for the approximation error $\eps=|\rank(q)-\rankh(q)|$. We start with the following result, which was mentioned in passing in Section \ref{sec:3-espc-index}.

\begin{lemma}
\label{lemma:rank-error-bound-number-keys}
    Let $k\in\{1,\ldots,K\}$ and $q\in I_k$. Then, it holds that $|\rank(q)-\rankh(q)| \leq n_k/2$ where $n_k$ is the number of keys that fall on interval $I_k$.
\end{lemma}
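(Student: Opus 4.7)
The plan is to proceed directly from the definitions of $\rankh$ and $\hat{r}_k$, together with the monotonicity of $\rank$. The key observation is that for any $q$ lying in the subinterval $I_k = [t_{k-1}, t_k]$, the value $\rank(q)$ is sandwiched between $\rank(t_{k-1})$ and $\rank(t_k)$, because $\rank$ is a non-decreasing function of its argument. Meanwhile, $\rankh(q)$ is just the constant $\hat{r}_k$, which by construction is the midpoint of the interval $[\rank(t_{k-1}), \rank(t_k)]$.

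Concretely, I would first write down the two inequalities $\rank(t_{k-1}) \leq \rank(q) \leq \rank(t_k) = \rank(t_{k-1}) + n_k$, using the definition of $n_k$ from Section \ref{sec:3-espc-index}. Then I would recall that $\rankh(q) = \hat{r}_k = \rank(t_{k-1}) + n_k/2$. Subtracting gives
\begin{equation*}
    -\frac{n_k}{2} \;\leq\; \rank(q) - \rankh(q) \;\leq\; \frac{n_k}{2},
\end{equation*}
which is equivalent to the desired bound $|\rank(q) - \rankh(q)| \leq n_k/2$.

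There is no real obstacle here; the argument is essentially the observation that the worst-case error of approximating a monotone function on an interval by a single constant is minimized by choosing the midpoint of the function's range on that interval, and the resulting error is half the range. The only minor thing to be careful about is the boundary case $q = t_{k-1}$ or $q = t_k$, where $q$ may belong to both $I_{k-1}$ and $I_k$; the statement holds for either choice of $k$ since $\rank(t_{k-1})$ and $\rank(t_k)$ lie in the respective midpoint intervals by the same argument.
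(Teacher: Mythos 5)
Your proposal is correct and follows essentially the same route as the paper's own proof in Appendix~\ref{sec:appendix-1}: sandwich $\rank(q)$ between $\rank(t_{k-1})$ and $\rank(t_k)=\rank(t_{k-1})+n_k$ by monotonicity, then subtract the midpoint $\hat{r}_k=\rank(t_{k-1})+n_k/2$ to obtain the bound $\pm n_k/2$. The remark about the boundary points $t_{k-1}, t_k$ is a harmless extra observation not needed by either argument.
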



\general{The proof is straightforward and is deferred to Appendix \ref{sec:appendix-1}.}
\probanalysis{We now introduce the remaining necessary notation. First, notice that if $f$ has bounded support $[a, b]$, then with probability $1$ it holds that $[\osx{1}, \osx{n}]\subseteq [a,b]$. Now, for a given $K$ we can define a set of subintervals $\{J_k\}_{k=1}^K$ that partition the interval $[a, b]$ into $K$ equal-length parts, similarly to how the $\{I_k\}_{k=1}^K$ form a partition of $[\osx{1}, \osx{n}]$. See Figure \ref{fig:intervals-subintervals} for an illustration with $K=3$.
    
    From this definition, it may be the case that some $I_k$ is not entirely contained within just one $J_k$ interval, as Figure \ref{fig:intervals-subintervals} illustrates for the case of $I_2$. Notice however that if this happens, then $I_k$ must be contained in the union of two consecutive $J_k$ intervals, as exemplified by $I_2\subseteq J_1 \cup J_2$ in Figure \ref{fig:intervals-subintervals}. We formalize this observation below, which is easily verified.

    \begin{fact}
    \label{lemma:intervals-subintervals}
        Suppose $[\osx{1}, \osx{n}] \subseteq [a,b]$. Take $k,\ell\in\{1,\ldots,K\}$ such that $I_k\cap J_\ell\neq\emptyset$. Then either $I_k\subseteq J_{\ell-1}\cup J_{\ell}$ or $I_k\subseteq J_{\ell}\cup J_{\ell+1}$ holds. As a consequence, it also holds that $I_k\subseteq J_{\ell-1}\cup J_{\ell} \cup J_{\ell+1}$, where $J_0,J_{K+1}$ are defined as the empty set $\emptyset$ for completion.
    \end{fact}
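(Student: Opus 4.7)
The key observation is simply that the intervals $I_k$ are no longer than the intervals $J_\ell$: by construction each $I_k$ has length $\delta_I := (\osx{n}-\osx{1})/K$ and each $J_\ell$ has length $\delta_J := (b-a)/K$, and since $[\osx{1}, \osx{n}] \subseteq [a,b]$ we have $\delta_I \leq \delta_J$. Because $I_k$ is no longer than a $J$-interval, it can span at most one boundary between consecutive $J_\ell$'s, hence it is contained in the union of two consecutive $J$-intervals.

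To make this precise, I would write $I_k = [t_{k-1}, t_k]$ and $J_\ell = [s_{\ell-1}, s_\ell]$, pick any $x \in I_k \cap J_\ell$, and split on whether $t_{k-1} \geq s_{\ell-1}$ or $t_{k-1} < s_{\ell-1}$. In the first case, from $t_{k-1} \leq x \leq s_\ell$ and $\delta_I \leq \delta_J$ I get $t_k = t_{k-1} + \delta_I \leq s_\ell + \delta_J = s_{\ell+1}$, which together with $t_{k-1} \geq s_{\ell-1}$ yields $I_k \subseteq [s_{\ell-1}, s_{\ell+1}] = J_\ell \cup J_{\ell+1}$. In the second case, $t_k = t_{k-1} + \delta_I < s_{\ell-1} + \delta_J = s_\ell$, and since $x \geq s_{\ell-1}$ gives $t_{k-1} \geq x - \delta_I \geq s_{\ell-1} - \delta_J = s_{\ell-2}$, so $I_k \subseteq [s_{\ell-2}, s_\ell] = J_{\ell-1} \cup J_\ell$. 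The consequence $I_k \subseteq J_{\ell-1}\cup J_\ell \cup J_{\ell+1}$ is then immediate.

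The only mildly subtle point is the handling of the boundary cases $\ell=1$ and $\ell=K$, where $J_0$ or $J_{K+1}$ is taken to be $\emptyset$. Here the hypothesis $[\osx{1}, \osx{n}] \subseteq [a, b]$ is exactly what rules out the problematic alternative: for $\ell = 1$, any $t_{k-1} \geq \osx{1} \geq a = s_0$ forces Case A, and for $\ell = K$, the bound $t_k \leq \osx{n} \leq b = s_K$ shrinks the conclusion of Case A from $J_K \cup J_{K+1}$ to just $J_K$. Thus the statement remains valid at both ends, and no separate treatment is required beyond noting that $\delta_I \leq \delta_J$.

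Overall the argument is elementary; the only thing to be careful about is matching the right-endpoint/left-endpoint inequalities, and verifying that the edge indices $\ell \in \{1, K\}$ behave correctly with the convention $J_0 = J_{K+1} = \emptyset$. There is no serious obstacle.
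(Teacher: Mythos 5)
Your proof is correct, and it formalizes precisely the observation the paper leaves unproved ("which is easily verified"): since $[\osx{1},\osx{n}]\subseteq[a,b]$ forces $\delta_I\le\delta_J$, each $I_k$ can straddle at most one boundary point $s_\ell$, and your two-case endpoint comparison plus the check that the hypothesis rules out the degenerate alternatives at $\ell=1$ and $\ell=K$ is exactly the intended argument. No gaps.
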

}

\begin{figure}
    \begin{center}
        \begin{tikzpicture}
            \draw[line width=0.5pt](0,0)node[JB tick,label=below:$a$](1){}
                --(1.5,0)node[IB tick,label=below:$\osx{1}$](2){}
                --(3,0)node[I tick,label=below:$ $](3){}
                --(4,0)node[J tick,label=below:$ $](4){}
                --(4.5,0)node[I tick,label=below:$ $](5){}
                --(6,0)node[IB tick,label=below:$\osx{n}$](6){}
                --(8,0)node[J tick,label=below:$ $](7){}
                --(12,0)node[JB tick,label=below:$b$](8){};
            \draw[decorate, decoration={brace,raise=0.5mm}, line width=1pt, red]
                (2)--node[below,yshift=6.5mm]{$I_1$}(3);
            \draw[decorate, decoration={brace,raise=0.5mm}, line width=1pt, red]
                (3)--node[below,yshift=6.5mm]{$I_2$}(5);
            \draw[decorate, decoration={brace,raise=0.5mm}, line width=1pt, red]
                (5)--node[below,yshift=6.5mm]{$I_3$}(6);
            \draw[decorate, decoration={brace,raise=10.5mm,mirror}, line width=1pt, blue]
                (1)--node[below,yshift=-11mm]{$J_1$}(4);
            \draw[decorate, decoration={brace,raise=10.5mm,mirror}, line width=1pt, blue]
                (4)--node[below,yshift=-11mm]{$J_2$}(7);
            \draw[decorate, decoration={brace,raise=10.5mm,mirror}, line width=1pt, blue]
                (7)--node[below,yshift=-11mm]{$J_3$}(8);
        \end{tikzpicture}
    \end{center}
    \vspace*{-10pt}
    \caption{
        \probanalysis{
            Equal-length partitions for $[\osx{1}, \osx{n}]$ (given by $\{I_k\}$) and $[a, b]$ (given by $\{J_k\}$).
        }
    }
    \label{fig:intervals-subintervals}
    \vspace*{-10pt}
\end{figure}


\subsection{Main Probabilistic Bound}
\label{sec:4-main-bound}

Recapping, by Proposition \ref{prop:espc-basic-complexity} the ESPC index time complexity for input $q$ depends on the approximation error $\eps=|\rank(q)-\rankh(q)|$. By Lemma \ref{lemma:rank-error-bound-number-keys}, the error $\eps$ critically depends on the number of keys in the subintervals $\{I_k\}$. We now state and prove the key results which provide bounds for the approximation error.
\reviewerone{As mentioned in Section \ref{sec:3-complexity}, we assume that the density $\densityx$ has bounded support $[a,b]$ for some $a, b\in\mathbb{R}$. That is, $\densityx(x)=0$ for any $x\not\in [a,b]$. Our results can be extended to the case of $\densityx$ with unbounded support by incurring a penalty in terms of space complexity.}
\probanalysis{Notice that the $\{J_k\}$ implicitly define a random variable, which models how the $\{X_i\}_{i=1}^n$ are distributed among the subintervals. This can be characterized by a parameter $(p_1, \ldots, p_K)$ where $p_k=\mathbb{P}(X_i\in J_k)$.}
Notice that $p_k$ is not indexed by $i$ because all $X_i$ follow the same distribution. In this context, the core of the time complexity analysis is contained in the following theorem.

\begin{theorem}
\label{thm:bound-error-K}
    \reviewerone{
        Suppose $f$ has support $[a, b]$.
    }
    Let $A$ be an array consisting of the $\{X_i\}_{i=1}^n$ sorted in ascending order. Let $\rankh$ be the approximator of the ESPC index with $K$ subintervals, as described in Section \ref{sec:3-espc-index}, and denote the approximation error for $q\in\mathbb{R}$ as $\eps(q)=|\rank(q)-\rankh(q)|$. Then, if the query parameter $q$ has the same density function $\densityx$ as the $\{X_i\}$, it holds that \probanalysis{$\expec{\eps} \leq \frac{3n}{2} \sum_{k=1}^K p_k^2$}.
\end{theorem}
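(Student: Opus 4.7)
The plan is to reduce everything to the deterministic partition $\{J_\ell\}$ of $[a,b]$, so that the expectation over the random data and the random query factors nicely. Write $m_\ell$ for the number of keys $X_i$ that fall in $J_\ell$, so $\expec{m_\ell}=\sum_{i=1}^n \prob{X_i\in J_\ell}=np_\ell$ by linearity of expectation (no independence of the $X_i$ is needed here). Since $\rankh(q)=\rank(q)$ whenever $q<\osx{1}$ or $q>\osx{n}$, the error $\eps(q)$ is zero outside $[\osx{1},\osx{n}]$, so
\begin{equation*}
    \expec{\eps(q)}
    = \sum_{\ell=1}^K \expec{\eps(q)\,\indf_{q\in J_\ell,\,\osx{1}\leq q\leq \osx{n}}}
    \leq \sum_{\ell=1}^K \expec{\eps(q)\,\indf_{q\in J_\ell}}.
\end{equation*}

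Next I would combine Lemma \ref{lemma:rank-error-bound-number-keys} with Fact \ref{lemma:intervals-subintervals} to bound the integrand. On the event $\{q\in J_\ell\}\cap\{\osx{1}\leq q\leq \osx{n}\}$ there is a unique $k$ with $q\in I_k$, and since $I_k\cap J_\ell\neq\emptyset$, Fact \ref{lemma:intervals-subintervals} gives $I_k\subseteq J_{\ell-1}\cup J_\ell\cup J_{\ell+1}$, hence $n_k\leq m_{\ell-1}+m_\ell+m_{\ell+1}$. Lemma \ref{lemma:rank-error-bound-number-keys} then yields $\eps(q)\leq n_k/2\leq (m_{\ell-1}+m_\ell+m_{\ell+1})/2$. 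Crucially, this bound depends only on the deterministic $J$-partition and on the data, so it is independent of $q$. Assuming $q$ is independent of $\{X_i\}$, I can factor
\begin{equation*}
    \expec{\eps(q)\,\indf_{q\in J_\ell}}
    \leq \tfrac{1}{2}\expec{m_{\ell-1}+m_\ell+m_{\ell+1}}\,\prob{q\in J_\ell}
    = \tfrac{n}{2}\,(p_{\ell-1}+p_\ell+p_{\ell+1})\,p_\ell,
\end{equation*}
with the convention $p_0=p_{K+1}=0$.

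Summing over $\ell$ and applying AM--GM in the form $p_\ell p_{\ell\pm 1}\leq \tfrac{1}{2}(p_\ell^2+p_{\ell\pm 1}^2)$ gives $\sum_\ell p_\ell p_{\ell-1}\leq \sum_\ell p_\ell^2$ and similarly for the $\ell+1$ term, so the three contributions combine into at most $3\sum_\ell p_\ell^2$, yielding the target bound $\expec{\eps(q)}\leq \tfrac{3n}{2}\sum_{k=1}^K p_k^2$.

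The main subtlety, and the step I would be most careful with, is the interaction between the random partition $\{I_k\}$ (whose endpoints depend on $\osx{1}, \osx{n}$) and the deterministic partition $\{J_\ell\}$: the error bound from Lemma \ref{lemma:rank-error-bound-number-keys} is stated in terms of $n_k$, but the available probabilistic control is over $m_\ell$. Fact \ref{lemma:intervals-subintervals} is exactly what bridges the two, and it is the reason the factor $3$ appears. Everything else is linearity of expectation (which avoids any independence assumption among the $X_i$) together with the independence of $q$ from the data, which is only used to factor the single product $\expec{m_j\indf_{q\in J_\ell}}=np_j\cdot p_\ell$.
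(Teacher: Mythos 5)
Your proposal is correct and follows essentially the same route as the paper's proof: the same three ingredients (Lemma \ref{lemma:rank-error-bound-number-keys}, Fact \ref{lemma:intervals-subintervals} to pass from the random $I_k$ to the deterministic $J_\ell$, and linearity giving $\expec{m_\ell}=np_\ell$ without independence of the $X_i$) lead to the identical intermediate bound $\tfrac{n}{2}\sum_\ell (p_{\ell-1}+p_\ell+p_{\ell+1})p_\ell$. The only differences are bookkeeping: you decompose the expectation over the events $\{q\in J_\ell\}$ and bound $n_k\leq m_{\ell-1}+m_\ell+m_{\ell+1}$, where the paper conditions on the data, integrates $\eps(q)\densityx(q)$ over each $I_k$, and bounds $\int_{I_k}\densityx\leq p_{\ell-1}+p_\ell+p_{\ell+1}$; and you finish with AM--GM where the paper uses Cauchy--Schwarz, which are interchangeable here (and your explicit use of the independence of $q$ from the data is the same assumption the paper makes implicitly via the tower property).
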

\begin{proof}
    \probanalysis{
        We know that $\expec{\eps} = \expec{\expec{\eps \>|\> X_1, \ldots, X_n}}$, by the tower property for conditional expectation. The random nature of the error depends on both the choice of $q$ and the values of $\{X_i\}_{i=1}^n$. To see this, recall that the variables $X_1, \ldots, X_n$ determine the value of both $\rank$ and $\rankh$.
        The inner expected value gives
        \begin{equation*}
            \expec{\eps \>|\> X_1, \ldots, X_n}
            = 
            \int_\mathbb{R} \eps(q) \densityx(q) dq
            =
            \int_{\osx{1}}^{\osx{n}} \eps(q) \densityx(q) dq,
        \end{equation*}
        where the last equality comes from the fact that $\eps(q)=0$ if $q\notin [\osx{1}, \osx{n}]$. Replacing $\eps(q)=|\rank(q)-\rankh(q)|$ and noticing that the $\{I_k\}$ form a partition of $[\osx{1}, \osx{n}]$:
        \begin{equation*}
            \expec{\eps \>|\> X_1, \ldots, X_n}
            =
            \sum_{k=1}^K\int_{I_k} |\rank(q)-\rankh(q)| \densityx(q) dq
            \leq
            \frac{1}{2}\sum_{k=1}^K n_k \int_{I_k} \densityx(q) dq,
        \end{equation*}
        where $n_k$ is the number of keys in subinterval $I_k$ and we have used Lemma \ref{lemma:rank-error-bound-number-keys} to bound the error within each $I_k$. Now, denote by $n_{k,\ell}$ the number of keys in $I_k \cap J_\ell$. Then, we can write
        \begin{equation}
        \label{eq:conditional-error-in-terms-of-I}
            \expec{\eps \>|\> X_1, \ldots, X_n}
            \leq
            \frac{1}{2}\sum_{k=1}^K \left(\sum_{\ell = 1}^K n_{k, \ell}\right) \int_{I_k} \densityx(q) dq
            =
            \frac{1}{2}\sum_{\ell=1}^K \sum_{k = 1}^K n_{k, \ell} \int_{I_k} \densityx(q) dq,
        \end{equation}
        where first we have written $n_k$ as $\sum_{\ell = 1}^K n_{k, \ell}$ and then we have exchanged the order of summation. Now, denote by $L(k)$ the set of $\ell$ indices such that $I_k\cap J_\ell \neq \emptyset$. Notice that $n_{k, \ell}=0$ if $\ell\notin L(k)$. Hence, inequality (\ref{eq:conditional-error-in-terms-of-I}) becomes:
        \begin{equation}
        \label{eq:conditional-error-in-terms-of-I-and-J}
            \expec{\eps \>|\> X_1, \ldots, X_n}
            \leq
            \frac{1}{2}\sum_{\ell=1}^K \sum_{k: \ell\in L(k)} n_{k, \ell} \int_{I_k} \densityx(q) dq.
        \end{equation}
        By Fact \ref{lemma:intervals-subintervals} we know that $I_k\subseteq J_{\ell-1} \cup J_{\ell} \cup J_{\ell+1}$ for all $\ell$ and $k$ such that $\ell\in L(k)$. Using the notation $p_\ell = \prob{X_i\in J_\ell}=\int_{J_\ell} \densityx(q) dq$ introduced above, from inequality (\ref{eq:conditional-error-in-terms-of-I-and-J}) we get
        \begin{equation}
        \label{eq:conditional-error-in-terms-of-J}
            \expec{\eps \>|\> X_1, \ldots, X_n}
            \leq
            \frac{1}{2}
            \sum_{\ell=1}^K (p_{\ell-1}+p_{\ell}+p_{\ell+1})
            \sum_{k: \ell\in L(k)} n_{k, \ell}
            =
            \frac{1}{2} \sum_{\ell=1}^K (p_{\ell-1}+p_{\ell}+p_{\ell+1}) m_\ell,
        \end{equation}
        where $m_\ell$ denotes the number of keys in $J_\ell$. Now, applying expected value on both sides of inequality (\ref{eq:conditional-error-in-terms-of-J}), we get that $\expec{\eps}=\expec{\expec{\eps \>|\> X_1, \ldots, X_n}} \leq \frac{1}{2}\sum_{\ell=1}^K (p_{\ell-1}+p_{\ell}+p_{\ell+1}) \expec{m_\ell}$. Furthermore, we have
        \begin{equation*}
            \expec{m_\ell}
            = \expec{\sum_{i=1}^n \mathbf{1}_{X_i\in J_\ell}}
            = \sum_{i=1}^n \expec{\mathbf{1}_{X_i\in J_\ell}}
            = \sum_{i=1}^n \prob{X_i\in J_\ell}
            = n p_\ell.
        \end{equation*}
        Replacing in the expression for $\expec{\eps}$, we get the following inequality:
        \begin{equation}
        \label{eq:expected-error-in-terms-of-p}
            \expec{\eps}
            \leq
            \frac{1}{2}\sum_{\ell=1}^K (p_{\ell-1}+p_{\ell}+p_{\ell+1}) \expec{m_\ell}
            =
            \frac{n}{2}\sum_{\ell=1}^K (p_{\ell-1}p_{\ell}+p_{\ell}^2+p_{\ell}p_{\ell+1}).
        \end{equation}
        Now, using Cauchy–Schwarz inequality and the fact that $p_0=\prob{X_i\in J_0}=0$ we get:
        \begin{equation*}
            \sum_{\ell=1}^K p_{\ell-1}p_{\ell}
            \leq
            \left(\sum_{\ell=1}^K p_{\ell-1}^2 \sum_{\ell=1}^K p_{\ell}^2\right)^{1/2}
            =
            \left(\sum_{\ell=1}^{K-1} p_{\ell}^2 \sum_{\ell=1}^K p_{\ell}^2\right)^{1/2}
            \leq
            \left(\sum_{\ell=1}^K p_{\ell}^2 \sum_{\ell=1}^K p_{\ell}^2\right)^{1/2}
            =
            \sum_{\ell=1}^K p_{\ell}^2.
        \end{equation*}
        Similarly, using Cauchy–Schwarz inequality and the fact that $p_{K+1}^2=0$ we can show that $\sum_{\ell=1}^K p_{\ell}p_{\ell+1}\leq \sum_{\ell=1}^K p_{\ell}^2$. Finally, applying these inequalities in (\ref{eq:expected-error-in-terms-of-p}), we get the desired result:
        \begin{equation*}
            \expec{\eps}
            \leq
            \frac{n}{2}\sum_{\ell=1}^K (p_{\ell-1}p_{\ell}+p_{\ell}^2+p_{\ell}p_{\ell+1})
            \leq
            \frac{n}{2}\left(\sum_{\ell=1}^K p_{\ell}^2+\sum_{\ell=1}^K p_{\ell}^2+\sum_{\ell=1}^K p_{\ell}^2\right)
            \leq
            \frac{3n}{2}\sum_{\ell=1}^K p_{\ell}^2.
        \end{equation*}
    }
\end{proof}

Theorem \ref{thm:bound-error-K} is the key tool for proving our results, as we show in Section \ref{sec:4-proof-complexity}.

\subsection{Proof of Asymptotic Complexity}
\label{sec:4-proof-complexity}

We now use Theorem \ref{thm:bound-error-K} to prove the results stated in Section \ref{sec:3-complexity}. For this, we need to relate the error bound in Theorem \ref{thm:bound-error-K} with $\rhohf=\norm{\densityx}_2^2$. For this, notice that
\general{
    \begin{equation*}
        p_k^2
        =
        \left(\int_{J_k} \densityx(x) dx\right)^2
        \leq
        \left(\int_{J_k} 1 dx\right)
        \left(\int_{J_k} \densityx^2(x) dx\right)
        =
        |J_k| \int_{J_k} \densityx^2(x) dx,
    \end{equation*}
    where $|J_k|$ denotes the length of interval $J_k$ and we have used Cauchy-Schwarz inequality to bound $\big(\int_{J_k} \densityx(x) dx\big)^2$. Substituting this in the error bound from Theorem \ref{thm:bound-error-K} and using the fact that $|J_k|=(b-a)/K$ for all $k$, we obtain
    \begin{align*}
        \expec{\eps}
        &\leq
        \frac{3n}{2}\sum_{k=1}^K p_k^2
        \leq
        \frac{3n}{2}\sum_{k=1}^K
        \frac{(b-a)}{K} \int_{J_k} \densityx^2(x) dx
        =
        \frac{3(b-a)}{2}\frac{n}{K}\sum_{k=1}^K \int_{J_k} \densityx^2(x) dx
    \intertext{Since $\{J_k\}$ is a partition of $[a, b]$ and $\densityx$ is square-integrable:}
        \expec{\eps}
        &\leq
        \frac{3(b-a)}{2}\frac{n}{K}\int_a^b \densityx^2(x) dx
        =
        \frac{3(b-a)}{2}\frac{n}{K} \int_\mathbb{R} \densityx^2(x) dx
        =
        \frac{3(b-a)}{2}\rhohf\frac{n}{K}.
    \end{align*}
    
    We can summarize the main error bound for the ESPC index in the following proposition.}

\reviewerone{\begin{proposition}
    \label{prop:bound-error-rho}
        Under the conditions of Theorem \ref{thm:bound-error-K}, and assuming $\rhohf < \infty$, it holds
        \begin{equation*}
            \expec{\eps}
            \leq
            \frac{3(b-a)}{2}\rhohf\frac{n}{K}.
        \end{equation*}
    \end{proposition}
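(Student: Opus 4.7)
The plan is to reduce the statement to a direct application of Cauchy--Schwarz on top of the bound already supplied by Theorem~\ref{thm:bound-error-K}. That theorem gives $\expec{\eps} \leq \tfrac{3n}{2}\sum_{k=1}^K p_k^2$, so the entire task is to bound $\sum_k p_k^2$ in terms of $\rhohf$ and the geometry of the equal-length partition $\{J_k\}_{k=1}^K$ of $[a,b]$.

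The first step is to rewrite each probability as $p_k = \int_{J_k} \densityx(x)\,dx$ and apply the Cauchy--Schwarz inequality to the product $\densityx \cdot 1$ over $J_k$, obtaining $p_k^2 \leq |J_k|\int_{J_k}\densityx^2(x)\,dx$. Because the $\{J_k\}$ partition $[a,b]$ into $K$ equal-length pieces, every $|J_k|$ equals $(b-a)/K$, so I can factor this constant out of the sum. The second step is then to sum over $k$: since $\densityx$ vanishes outside $[a,b]$ and $\{J_k\}$ covers $[a,b]$, one has $\sum_{k=1}^K \int_{J_k}\densityx^2(x)\,dx = \int_a^b \densityx^2(x)\,dx = \int_{\mathbb{R}} \densityx^2(x)\,dx = \rhohf$. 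This produces $\sum_{k=1}^K p_k^2 \leq \tfrac{b-a}{K}\rhohf$, and substituting into the Theorem~\ref{thm:bound-error-K} bound gives exactly $\expec{\eps} \leq \tfrac{3(b-a)}{2}\rhohf\tfrac{n}{K}$.

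There is no genuine obstacle here: the proof is essentially a one-liner once Theorem~\ref{thm:bound-error-K} is in hand. The only care needed is to verify that all integrals are finite, which is guaranteed by the assumption $\rhohf < \infty$ (hence $\densityx \in L^2([a,b])$ and, by Cauchy--Schwarz against the bounded support, $\densityx \in L^1([a,b])$ as well), and to note that the equal-length structure of $\{J_k\}$ is precisely what allows the individual factors $|J_k|$ to collapse into the uniform prefactor $(b-a)/K$. A non-uniform partition would still yield a $\max_k |J_k|$ bound but not the clean closed form of the proposition.
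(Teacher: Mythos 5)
Your proof is correct and takes essentially the same route as the paper's: both start from the bound $\expec{\eps} \leq \frac{3n}{2}\sum_{k=1}^K p_k^2$ of Theorem~\ref{thm:bound-error-K}, apply Cauchy--Schwarz on each $J_k$ to obtain $p_k^2 \leq |J_k|\int_{J_k}\densityx^2(x)\,dx$, and then use $|J_k|=(b-a)/K$ together with the fact that $\{J_k\}$ partitions the support to collapse the sum to $\rhohf$. No gaps.
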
}
   
\general{The complexity results from Section \ref{sec:3-complexity} are proved in Appendix \ref{sec:appendix-2} as corollaries of Proposition \ref{prop:bound-error-rho}. In Sections \ref{sec:4-query-parameter} and \ref{sec:4-unbounded-support} we discuss two important generalizations of our results. In Section \ref{sec:5}, we present further analysis and a comparison to previous results.}

\subsubsection{Distribution of query parameter}
\label{sec:4-query-parameter}

As can be seen from the statement of Theorem \ref{thm:bound-error-K}, we have assumed that $q$ has the same probability density function $\densityx$ as the $\{X_i\}$. We believe this to be a reasonable approximation in many contexts, but our analysis can be extended to $q$ otherwise distributed, which constitutes an important generalization with respect to previous work \cite{zeighami2023distribution}.

\reviewerone{
    \begin{proposition}
    \label{prop:bound-error-rho-query}
        Assume the conditions of Theorem \ref{thm:bound-error-K}. Additionally, suppose $\rhohf < \infty$ and that the search parameter $q$ distributes according to a density function $g$. Further suppose that $g$ is square-integrable and denote $\rhohg = \norm{g}_2^2$. Then, it holds that
        \begin{equation*}
            \expec{\eps}
            \leq
            \frac{3(b-a)}{2}\sqrt{\rhohf\rhohg}\frac{n}{K}.
        \end{equation*}
    \end{proposition}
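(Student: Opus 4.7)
The plan is to revisit the proof of Theorem \ref{thm:bound-error-K}, following the same sequence of manipulations, but with $g$ replacing $\densityx$ wherever the density of the query parameter appears. Concretely, in the inner conditional expectation we now write
\[
\expec{\eps \mid X_1,\ldots,X_n} = \int_{\osx{1}}^{\osx{n}} \eps(q)\,g(q)\,dq,
\]
so alongside the old probabilities $p_\ell = \int_{J_\ell} \densityx(x)\,dx$ we introduce a second family $q_\ell = \int_{J_\ell} g(x)\,dx$. Bounding the error on $I_k$ via Lemma \ref{lemma:rank-error-bound-number-keys}, splitting $n_k = \sum_\ell n_{k,\ell}$, swapping the order of summation, and invoking Fact \ref{lemma:intervals-subintervals} should give
\[
\expec{\eps \mid X_1,\ldots,X_n} \;\leq\; \frac{1}{2}\sum_{\ell=1}^K (q_{\ell-1}+q_\ell+q_{\ell+1})\,m_\ell,
\]
where $m_\ell$ is the number of keys in $J_\ell$. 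Taking expectations and using $\expec{m_\ell} = np_\ell$ yields three cross sums $\sum_\ell q_{\ell-1}p_\ell$, $\sum_\ell q_\ell p_\ell$, $\sum_\ell q_{\ell+1}p_\ell$.

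Next, I would bound each of the three cross sums by a single Cauchy--Schwarz application. Using the boundary conventions $q_0 = q_{K+1} = 0$ (and the parallel identities for $p$), exactly as in the original proof, each cross sum is at most $\bigl(\sum_\ell p_\ell^2 \sum_\ell q_\ell^2\bigr)^{1/2}$. Combining the three pieces gives
\[
\expec{\eps} \;\leq\; \frac{3n}{2}\,\bigg(\sum_{\ell=1}^K p_\ell^2\,\sum_{\ell=1}^K q_\ell^2\bigg)^{1/2}.
\]

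Finally, I would apply twice the Cauchy--Schwarz argument already used just before Proposition \ref{prop:bound-error-rho} to each of the sums individually, obtaining $\sum_\ell p_\ell^2 \leq \frac{b-a}{K}\rhohf$ and $\sum_\ell q_\ell^2 \leq \frac{b-a}{K}\rhohg$; the latter bound holds even if $g$ happens to have mass outside $[a,b]$, because the sum only captures mass of $g$ inside $[a,b]$. Substituting produces the stated inequality.

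The main obstacle is purely bookkeeping: one must verify that the Cauchy--Schwarz step respects the index shifts (this is where the zero boundary values matter) and that the two ``sides'' of the problem decouple cleanly—the $X_i$ are integrated out via expectation while $q$ is integrated against $g$, so the cross terms factor symmetrically into a product of $\ell_2$-type norms. No new probabilistic content is needed beyond the original theorem; the argument is essentially the symmetric generalization obtained by tracking $f$ and $g$ separately through the same steps.
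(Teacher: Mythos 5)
Your proposal is correct and follows essentially the same route as the paper's own proof: the identical conditional-expectation decomposition, the same use of Lemma \ref{lemma:rank-error-bound-number-keys} and Fact \ref{lemma:intervals-subintervals} to reach the three cross sums $\sum_\ell q_{\ell-1}p_\ell$, $\sum_\ell q_\ell p_\ell$, $\sum_\ell q_{\ell+1}p_\ell$, the same Cauchy--Schwarz bound $\bigl(\sum_\ell p_\ell^2 \sum_\ell q_\ell^2\bigr)^{1/2}$ on each, and the same final step bounding each squared sum by $\frac{b-a}{K}$ times the corresponding $L^2$ norm. The bookkeeping points you flag (zero boundary terms, mass of $g$ outside $[a,b]$) are handled exactly as in the paper, so there is nothing to add.
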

}

See Appendix \ref{sec:appendix-3} for the proof. As can be seen, for the general case where $q$ does not distribute according to $f$, our results have essentially the same form. For ease of presentation, all our analyses from this point on assume the case where $g=f$. By Proposition \ref{prop:bound-error-rho-query} we know this does not result in any loss of 
generality. For the case of $q$ otherwise distributed, we substitute $\sqrt{\rhohf\rhohg}$ in place of $\rhohf$.

\subsubsection{Unbounded support}
\label{sec:4-unbounded-support}

\reviewerone{
    Proposition \ref{prop:bound-error-rho} can be extended to $\densityx$ with unbounded support. For $a, b\in\mathbb{R}$ denote by $A_{a, b}$ the event that $[\osx{1}, \osx{n}]\not\subseteq [a,b]$. Take values of $a,b$ such that $\prob{A_{a, b}}\leq \frac{1}{\log n}$. Then, the expected query time is $O(\expec{\log \eps})$ where the expected value can be written as
    \begin{equation*}
        \expec{\log \eps}
        =
        E_1 + E_2
        \>\text{ where }\>
        E_1 = \expec{\log \eps \>\big|\> A_{a, b}^C}\prob{A_{a, b}^C}
        \>\text{ and }\>
        E_2 = \expec{\log \eps \>\big|\> A_{a, b}}\prob{A_{a, b}}.
    \end{equation*}
    Since the prediction error $\eps$ can never exceed $n$, the $E_2$ term can be bounded as
    \begin{equation*}
        E_2
        = \expec{\log \eps \>\big|\> A_{a, b}}\prob{A_{a, b}}
        \leq \log n \cdot \prob{A_{a, b}}
        \leq \log n \cdot \frac{1}{\log n} = 1.
    \end{equation*}
    On the other hand, the event $A_{a, b}^C$ means that $[\osx{1}, \osx{n}]\subseteq [a, b]$. This allows us to apply Jensen's inequality along with Proposition \ref{prop:bound-error-rho} to get
    \begin{equation*}
        E_1
        =
        \expec{\log \eps \>\big|\> A_{a, b}^C}\prob{A_{a, b}^C}
        \leq
        \expec{\log \eps \>\big|\> A_{a, b}^C}
        \leq
        \log \expec{\eps \>\big|\> A_{a, b}^C}
        \leq
        \log \left(\frac{3(b-a)}{2}\rhohf\frac{n}{K}\right).
    \end{equation*}
    Putting everything together, this means that expected query time is
    \begin{equation}
    \label{eq:expected-log-error-rho-n-K}
        O(\expec{\log \eps})
        =
        O\left(\log \left(\frac{3(b-a)}{2}\rhohf\frac{n}{K}\right) + 1\right)
        =
        O\left(\log \left((b-a)\rhohf\frac{n}{K}\right)\right).
    \end{equation}
    Hence, time complexity depends on the values of $a, b$ that guarantee $\prob{A_{a, b}}\leq \frac{1}{\log n}$. These values depend on how heavy-tailed the distribution is. We propose the following two variations of Theorem \ref{thm:constant-time}, with other results being available based on different assumptions.
}

\reviewerone{
    \begin{theorem}
        \label{thm:constant-time-unbounded-1}
            Suppose the $\{X_i\}$ have finite mean $\mu$ and variance $\sigma^2$, and that $\rhohf < \infty$. Then, there is a procedure $R_n$ for building learned indexes with $\spacen=O\big(n\sqrt{n\log n}\big)$ and $\timen=O(\rho)$, where $\rho=\log\big(2\sigma\rhohf\big)$. Since $\rho$ is independent of $n$, expected time is asymptotically $O(1)$.
    \end{theorem}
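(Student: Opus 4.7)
The plan is to use the construction discussed in Section~\ref{sec:4-unbounded-support}: build an ESPC index with a suitably chosen number $K$ of subintervals, then apply the bound in equation (\ref{eq:expected-log-error-rho-n-K}). The only freedom we have is in the choice of interval $[a, b]$ (used only in the analysis, not in the algorithm) and in $K$. The goal is to pick these so that $\prob{A_{a,b}} \leq 1/\log n$ holds while $(b-a)n/K$ remains of constant order.

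First, I would use Chebyshev's inequality to control the tail. Setting $a = \mu - t\sigma$ and $b = \mu + t\sigma$ for a parameter $t > 0$ to be determined, Chebyshev gives $\prob{X_i \notin [a, b]} \leq 1/t^2$ for each $i$. A union bound over $i = 1, \ldots, n$ then yields $\prob{A_{a,b}} \leq n/t^2$. In order to satisfy the hypothesis $\prob{A_{a,b}} \leq 1/\log n$ needed for equation (\ref{eq:expected-log-error-rho-n-K}), it suffices to take $t = \sqrt{n \log n}$, which gives $b - a = 2\sigma\sqrt{n \log n}$.

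Next, I would substitute this choice of $b-a$ into equation (\ref{eq:expected-log-error-rho-n-K}) to obtain
\begin{equation*}
    O(\expec{\log \eps})
    =
    O\!\left(\log\!\left(2\sigma\sqrt{n\log n}\,\rhohf\,\frac{n}{K}\right)\right).
\end{equation*}
To reduce this to $O(\rho) = O(\log(2\sigma\rhohf))$, I would choose $K$ so that the $n$-dependent factor inside the logarithm collapses to a constant, namely $K = \Theta\bigl(n\sqrt{n\log n}\bigr)$. This makes the argument of the logarithm equal to $2\sigma\rhohf$ up to a constant, hence $\timen = O(\rho)$. Since Proposition~\ref{prop:espc-basic-complexity} gives a space cost of $O(K)$, the resulting space complexity is exactly $\spacen = O\bigl(n\sqrt{n \log n}\bigr)$, matching the theorem statement.

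The only subtlety I would expect is checking that the analysis of Section~\ref{sec:4-main-bound}, originally phrased for distributions with bounded support, still applies here in the conditional form used by equation~(\ref{eq:expected-log-error-rho-n-K}). On the event $A_{a,b}^C$ we have $[\osx{1},\osx{n}] \subseteq [a,b]$, so the partition $\{I_k\}$ of $[\osx{1},\osx{n}]$ is refined by (or more precisely, each $I_k$ is covered by three consecutive members of) the partition $\{J_k\}$ of $[a,b]$ via Fact~\ref{lemma:intervals-subintervals}; the conditional chain of inequalities in Theorem~\ref{thm:bound-error-K} and Proposition~\ref{prop:bound-error-rho} then goes through verbatim. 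The rest of the argument is a direct application of the moment assumption through Chebyshev, together with the balance between $b-a$, $n$, and $K$ described above.
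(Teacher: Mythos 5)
Your proposal is correct and follows essentially the same route as the paper's own proof: the same choice $a=\mu-\sigma\sqrt{n\log n}$, $b=\mu+\sigma\sqrt{n\log n}$ via Chebyshev plus a union bound to get $\prob{A_{a,b}}\leq 1/\log n$, followed by substituting into equation (\ref{eq:expected-log-error-rho-n-K}) and taking $K=n\sqrt{n\log n}$. Your closing remark about the conditional applicability of the bounded-support analysis is a point the paper leaves implicit, but it does not change the argument.
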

}

\reviewerone{
    \begin{theorem}
        \label{thm:constant-time-unbounded-2}
                Suppose $\rhohf < \infty$ and that the distribution of the $\{X_i\}$ is subexponential~\cite{Vershynin_2018}, meaning there is a constant $C>0$ such that $\prob{|X_i| \geq x} \leq 2e^{-Cx}$ for all $x\geq 0$. Then, there is a procedure $R_n$ for building learned indexes with $\spacen=O(n \log n)$ and $\timen=O(\rho)$, where $\rho=\log\big(4\rhohf/C\big)$. Since $\rho$ is independent of $n$, expected time is asymptotically $O(1)$.
    \end{theorem}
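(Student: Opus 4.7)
The strategy is to apply the unbounded-support framework developed just before the theorem, which gives expected query time $O\bigl(\log((b-a)\rhohf n/K)\bigr)$ once an interval $[a,b]$ with $\prob{A_{a,b}} \leq 1/\log n$ and a number of subintervals $K$ have been fixed. The task therefore reduces to (i) picking the narrowest symmetric interval for which the tail event has probability $\leq 1/\log n$, and (ii) choosing $K$ so that the argument of the logarithm collapses to a constant multiple of $\rhohf/C$ while $K = O(n\log n)$.

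For step (i), I would take $a = -x_0$, $b = x_0$ and combine a union bound with the subexponential tail assumption, obtaining
$$
    \prob{A_{-x_0, x_0}} \;\leq\; \sum_{i=1}^n \prob{|X_i| \geq x_0} \;\leq\; 2n e^{-Cx_0}.
$$
Forcing the right-hand side to be at most $1/\log n$ yields $x_0 = \log(2n\log n)/C$, so that $b - a = 2\log(2n\log n)/C$. This is the only place where the subexponential hypothesis enters the argument.

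For step (ii), I would set $K = n\log(2n\log n)$, which makes $(b-a)n/K = 2/C$ and therefore $(b-a)\rhohf n/K = 2\rhohf/C$. Substituting into equation (\ref{eq:expected-log-error-rho-n-K}), the expected query time becomes $O\bigl(\log(2\rhohf/C)\bigr) \leq O\bigl(\log(4\rhohf/C)\bigr) = O(\rho)$. The space bound follows from $\spacen = O(K)$ together with $\log(2n\log n) = \Theta(\log n)$, which gives $\spacen = O(n\log n)$ as claimed.

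No step is genuinely hard; the whole proof is a calibration of the triple $(a, b, K)$ around the subexponential tail bound. The one subtlety worth flagging is that the chosen $K$ exceeds $n$, so many subintervals $I_k$ will be empty; this causes no issue for Lemma \ref{lemma:rank-error-bound-number-keys} or Theorem \ref{thm:bound-error-K}, since both bounds degrade gracefully when $n_k = 0$, but it should be noted explicitly so the reader does not expect $K \leq n$.
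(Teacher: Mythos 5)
Your proposal is correct and follows essentially the same route as the paper's proof: a union bound combined with the subexponential tail to fix a symmetric interval $[a,b]$ with $\prob{A_{a,b}}\leq 1/\log n$, followed by choosing $K=\Theta(n\log n)$ so that $(b-a)\rhohf n/K$ collapses to a constant multiple of $\rhohf/C$. The only differences are cosmetic calibration choices (the paper takes $b=2\log n/C$, giving slack $2/n\leq 1/\log n$, and $K=n\log n$, arriving at exactly $\log(4\rhohf/C)$, while you saturate the tail bound and land on $\log(2\rhohf/C)$, which is still $O(\rho)$).
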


Both Theorems are proved in Appendix \ref{sec:appendix-4}. In particular, Theorem \ref{thm:constant-time-unbounded-2} can also be viewed as a corollary of Lemma 3.4 in \cite{zeighami2023distribution}, which provides an alternative proof for this result.}




\section{Analysis and Benchmarking}
\label{sec:5}
We now take a closer look at the expected time complexity. Consider an ESPC index with $K$ subintervals for a sorted array $A$ with $n$ keys. Then, by Theorem \ref{thm:bound-error-K} and Jensen's inequality:
\begin{equation}
\label{eq:time-bound-discrete-probability}
    \expec{\log \eps}
    \leq
    \log \expec{\eps}
    \leq
    \log \left( \frac{3n}{2}\sum_{k=1}^K p_k^2 \right)
    =
    \log \left(\frac{3n}{2}\right) + \log \left( \sum_{k=1}^K p_k^2 \right)
\end{equation}
\general{
    where $p_k = \mathbb{P}\big(X_i\in J_k\big)$. Recall that $(p_1, \ldots, p_K)$ constitutes an induced discrete probability distribution, describing the probabilities of the $\{X_i\}$ falling on the different $\{J_k\}$ intervals.
}

\subsection{Analysis of Alternative Index Design}
\label{sec:5-alternative-design}

\general{
    The term $\log \big( \sum_{k=1}^K p_k^2 \big)$ is related to a notion of entropy and is minimized when $(p_1, \ldots, p_K)$ is a uniform distribution (see Section \ref{sec:5-renyi}). Hence, we could conceivably get better performance by trying to induce $p_k=1/K$ for all $k$. Consider then an alternative design where the $\{I_k\}$ subintervals are defined in such a way that they cover the $[a, b]$ range (instead of the $[\osx{1},\osx{n}]$ range) and we aim to enforce $p_k=1/K$ for all $k$, rather than an equal-length condition. This requires knowledge of $a$, $b$, and $F$, which we usually lack, but several strategies can be explored to estimate them from the data.
}

This new index design would refine the intervals where we expect data to be more concentrated, which seems to be a reasonable strategy. However, there is a trade-off. Choosing the subintervals so that $p_k=1/K$ means that they will not be equal-length, unless the $\{X_i\}$ distribute uniformly. Consequently, for a given query parameter $q$, we can no longer find its subinterval in constant time, as we did before with $k \gets \ceil*{\frac{1}{\delta}\big(q-\osx{1}\big)}$. This version of the index would need a new strategy to find the relevant subinterval $I_k$ such that $q\in I_k$.

\general{
    A standard strategy would be to create a new array $A'$ formed by the left endpoints of the $\{I_k\}$, that is, $A'=[t_0, \ldots, t_{k-1}]$. Now, for a given search parameter $q$ we can find the relevant $I_k$ with $k=\rank_{A'}(q)$, and then proceed with steps (\ref{alg:espc-evaluation-step-5}) and (\ref{alg:espc-evaluation-step-6}) from Algorithm \ref{alg:espc-evaluation}. Notice that the index is now hierarchical, with $A'$ at the top and $A$ (segmented into the $\{I_k\}$) at the bottom. From equation (\ref{eq:time-bound-discrete-probability}), expected time at the bottom layer would scale as
}
\begin{equation*}
    \expec{\log \eps_\text{bot}}
    \leq
    \log \left(\frac{3n}{2}\sum_{k=1}^K p_k^2 \right)
    =
    \log \left(\frac{3n}{2}\sum_{k=1}^K \frac{1}{K^2} \right)
    =
    \log \left(\frac{3n}{2K}\right).
\end{equation*}

On the other hand, the top layer has $K$ keys $\{t_0,\ldots,t_{k-1}\}$. Since the index is designed so that $p_k=1/K$, this means $F(t_i)= i/k$. As a consequence, the keys in $A'$ approximately follow distribution $F$, with a better approximation as $K\to\infty$. Suppose we use an ESPC index with $K'$ subintervals to speed up computation of $\rank_{A'}$. By Proposition \ref{prop:bound-error-rho} the expected time at the top layer will be dominated by $\expec{\log \eps_\text{top}} \leq \log \left(3(b-a)\rhohf K/(2K') \right)$ and the total expected time can be bounded as
\general{
    \begin{equation*}
        \expec{\log \eps_\text{bot}}
        + \expec{\log \eps_\text{top}}
        \leq
        \log \left(\frac{3n}{2K}\right)
        + \log \left(\frac{3(b-a)}{2}\rhohf \frac{K}{K'}\right)
        =
        \log \left(\frac{9(b-a)}{4}\rhohf \frac{n}{K'}\right).
    \end{equation*}
}

\general{
    Notice that $(b-a)\rhohf$ is still present in this bound. Now, for instance, with $K=K'=n$ we get twice the space usage compared to Theorem \ref{thm:constant-time} and the same asymptotic expected query time. This indicates that asymptotic complexity remains the same for a hierarchical or tree-like version of the index, so the basic ESPC index gives a simpler proof for our results.
}

\subsection{Bounds for Expected Query Time}

\subparagraph{\general{Complexity due to number of keys.}}

Equation (\ref{eq:time-bound-discrete-probability}) provides an interesting insight. We see that the expected log-error (and hence, the expected query time) can be decomposed into two separate sources of complexity. The first is due exclusively to the number of keys $n$, which suggests that the query problem gets increasingly difficult when new keys are inserted, even if the underlying distribution of the data remains the same. This helps to explain why learned indexes evidence a need for updating \cite{yang2023flirt, shahvarani2020parallel} in the face of new keys arriving, even when the same dataset is used to simulate the initial keys and the subsequent arrivals.

\subparagraph{\general{Information-theoretical considerations.}}
\label{sec:5-renyi}

The sum $\sum_{k=1}^K p_k^2$ is related to the concept of Rényi entropy \cite{renyi1961measures}. For a discrete random variable $P$ with possible outcomes ${1,\ldots,K}$ with correspondent probabilities $\{p_k\}_{k=1}^K$, the Rényi entropy of order $\alpha$ is defined as
\begin{equation*}
    H_\alpha(P)
    = \frac{1}{1-\alpha}
    \log \textstyle{\left(\sum_{k=1}^K p_k^\alpha \right)}.
\end{equation*}
for $\alpha > 0, \alpha\neq 1$. The cases $\alpha=0,1,\infty$ are defined as limits, with $H_1(P)$ giving the Shannon entropy \cite{renyi1961measures}. The Rényi entropy can be understood as a generalization of the Shannon entropy which preserves most of its properties.
Consequently, equation (\ref{eq:time-bound-discrete-probability}) can be written as:
\begin{equation}
    \label{eq:time-bound-entropy-discrete}
        \general{
            \expec{\log \eps}
            \leq \log\left(\frac{3n}{2}\right) - H_2(P).
        }
    \end{equation}
\general{$H_2$ appears in cryptography \cite{cachin1997entropy, skorski2015shannon} due to its relation with the collision probability $\sum_{k=1}^K p_k^2$, and is maximized by the uniform distribution \cite{skorski2015shannon}, that is, when $p_k=1/K$ for all $k$.}
\general{This suggests that negative entropy might be a good measure of statistical complexity for learned indexes, beyond the ESPC index. It captures a notion of distance to the uniform distribution $U$, which is easy to learn for the type of simple models commonly used in learned indexes. This notion of distance can be formalized by writing $-H_2(P)=D_2(P||U)-\log K$, where $D_2(P||U)$ is the Rényi divergence of order 2, defined analogously to the KL divergence \cite{van2014renyi}.}

\subparagraph{\general{Characterization via the stochastic process.}}

A similar insight can be reached by reference to the density function $\densityx$. By Proposition \ref{prop:bound-error-rho} and Jensen's inequality, we know
\reviewerone{
    \begin{equation*}
        \expec{\log \eps}
        \leq
        \log \left( (b-a)\rhohf\frac{n}{K} \right)
        =
        \log n + \log (b-a) - \log (K) - h_2(X_i),
    \end{equation*}
}
where we use $h_\alpha(X)=\frac{1}{1-\alpha}\log (\norm{\densityx}_\alpha^\alpha)$ to denote the Rényi entropy for continuous random variables \cite{vinga2004renyi}, similar to how differential entropy is defined in analogy to the Shannon entropy.
\general{This last expression provides a useful bound for the expected log-error, by separating the effect of the number $n$ of keys, the number $K$ of subintervals (which represent space usage), the support $b-a$, and an inherent characteristic $h_2(X_i)$ of the stochastic process.}



\subsection{Comparison with Existing Methods}
\label{sec:5-comparison}


In this section, we compare our probabilistic model and complexity bounds to previous work. All results are stated in terms of our RAM model. Table \ref{tab:complexity-benchmark} summarizes the main points.

\begin{table}
    \centering
    \begin{tabular}{|l|l|l|l|l|} 
         \hline
         \textbf{Method}
            & \textbf{Space}
            & \textbf{Expected time}
            & \makecell[ll]{
                \textbf{Parameters / } \\
                \textbf{Notation}
            }
            & \textbf{Probabilistic Model} \\
         \hline
         B+Tree
            & $O\left(\frac{n}{b}\right)$
            & $O(\log n)$
            & \makecell[ll]{
                Branching \\ factor $b$
            }
            & None \\
         \hline
         PGM \cite{ferragina2020pgm}
            & $O\left(\frac{n}{\eps^2}\right)$
            & $O(\log n)$
            & \makecell[ll]{
                Maximum \\ error $\eps$
            }
            & \makecell[ll]{
                Gaps $\{X_{i+1}-X_i\}$ i.i.d. \\
                with finite mean $\mu$ and \\
                variance $\sigma^2$, $\eps\gg \sigma/\mu$
             } \\
         \hline
         PCA \cite{zeighami2023distribution}
             & $O\left(\reviewerone{\rho}^{1+\delta} n^{1+\frac{\delta}{2}}\right)$
             & $O\left(\log\frac{1}{\delta}\right)$
             & \makecell[ll]{
                $\delta > 0$, \\
                \reviewerone{$\rho = (b-a)\rho_1$}
            }
             & \makecell[ll]{
                $\{X_i\}$ i.i.d. with
                density $\densityx$, \\
                $\densityx(x) \leq \rho_1 < \infty$ for all $x$
             } \\
         \hline
         RDA \cite{zeighami2023distribution}
            & $O\left(\general{\rho} n\right)$
            & $O(\log(\log n))$
            & \general{$\rho = \frac{\rho_1}{\rho_2}$}
            & \makecell[ll]{
                Same as PCA Index, plus \\
                $0 < \rho_2 \leq \densityx(x)$ for all $x$
             } \\
         \hline
         ESPC
            & \makecell[ll]{
                $O(n)$ \\ [1ex]
                $O\big( \frac{n}{\log n}\big)$
            }
            & \makecell[ll]{
                $O(\reviewerone{\rho})$ \\ [1ex]
                $O(\reviewerone{\rho} + \log(\log n))$
            }
            & \makecell[ll]{
                $K$ intervals, \\
                \reviewerone{$\rho = \log\big((b-a)\rhohf\big)$}
            }
            & \makecell[ll]{
                Variables $\{X_i\}$ have same \\
                density $\densityx$, $\rhohf = \norm{\densityx}_2^2 < \infty$
            } \\
         \hline
    \end{tabular}
    \vspace{2pt}
    \caption{Space and expected time complexity for learned indexes and B+Tree. The probabilistic assumptions are not necessary for the index to work, but they are necessary to guarantee the stated performance. \reviewerone{For the last three rows, it is assumed that $f$ has bounded support $[a, b]$.}}
    \label{tab:complexity-benchmark}
    \vspace{-23pt}
\end{table}

\subparagraph{\general{Non-learned methods.}}

There are many classical algorithms for searching a sorted array. They tend to be comparison-based and have $O(\log n)$ expected time, unless strong assumptions are made (e.g., interpolation search uses $O(\log(\log n))$ time on average if the keys are uniformly distributed \cite{knuth1998art}). We use B+Tree as a representative. A B+Tree has space overhead $O(n/b)$ where $b$ is the branching factor. Since $b$ does not depend on $n$, asymptotically this is just $O(n)$. The number of comparisons required to find a key is $O(\log n)$. This represents both worst-case and average-case complexity, with no probabilistic assumptions.

\subparagraph{\general{PGM-index.}}

In the absence of probabilistic assumptions, space overhead of the PGM-index is $O(n/\eps)$ and expected time is $O(\log n)$. Here, $\eps$ is a hyperparameter of the method and is independent of $n$, so this constitutes the same asymptotic complexity as B+Trees. Under certain assumptions, space overhead can be strengthened to $O(n/\eps^2)$ \cite{ferragina2020learned}, representing a constant factor improvement (i.e., same asymptotic space complexity).

In contrast to \cite{zeighami2023distribution} and our work, the assumptions in \cite{ferragina2020learned} concern the \textit{gaps} between keys. Specifically, the $\{X_{i+1}-X_i\}$ are assumed to be independent and identically distributed (i.i.d.), with finite mean $\mu$ and variance $\sigma^2$. It is also assumed that $\eps \gg \sigma/\mu$.  This setting is not directly comparable to ours, in the sense that neither is a particular instantiation of the other. However, the assumptions described above are very strong in our view and seem hard to justify for most practical applications. Also, the $\eps \gg \sigma/\mu$ condition constrains the choice of this key hyperparameter.

\subparagraph{\general{PCA Index and RDA Index.}}

    In \cite{zeighami2023distribution} several indexes are presented, the most relevant for us being the PCA and RDA indexes. \reviewertwo{The PCA Index is similar to the ESPC index, with some differences as explained in Section \ref{sec:3-espc-index}, including the use of binary search instead of exponential search.
}The RDA Index is similar to the Recursive Model Index (RMI) from \cite{kraska2018case}. All results below are stated in terms of the RAM model under the uniform cost criterion.

\reviewerone{
    In \cite{zeighami2023distribution}, the same assumption regarding bounded support $[a, b]$ for $f$ is needed, so we consider that case in this section. Extensions to unbounded support incur a penalty in terms of space complexity, dependent on the tails of the distribution. We first describe the PCA Index. For any $\delta>0$, there is a PCA Index with space overhead $O((b-a)^{1+\delta}\rho_1^{1+\delta}n^{1+\delta/2})$ and expected time complexity $O(\log\frac{1}{\delta})$.}
Here, $\rho_1$ is an upper bound for $\densityx$. Since it is independent of $n$, space overhead is asymptotically $O(n^{1+\delta/2})$. A direct comparison to the ESPC index is possible through Theorem \ref{thm:constant-time}, which provides a strictly stronger result, with expected constant time and truly linear space.

On the other hand, the RDA Index has space overhead $O({\scriptstyle\frac{\rho_1}{\rho_2}}n)$, where $\rho_2$ is a lower bound for $\densityx$ and $\rho_1$ is an upper bound (as for the case of the PCA Index). Asymptotically in $n$, this is $O(n)$. Expected query time in this case is $O(\log(\log n))$. A direct comparison to the ESPC index is possible through Theorem \ref{thm:loglog-time}, which provides a strictly stronger result, with $O(\log(\log n))$ expected time and sublinear space overhead.

Even as we improve the complexity bounds, we also generalize the domain of application for our result. As part of the probabilistic model for the PCA and RDA indexes, it is assumed that the $\{X_i\}$ are i.i.d. with density $\densityx$ \cite{zeighami2023distribution}. PCA index assumes $\densityx$ is upper bounded, that is, there exists some fixed $\rho_1<\infty$ such that $\densityx(x)\leq\rho_1$ for all $x$. Notice that this condition implies that $\densityx$ is square-integrable, so that our assumptions are strictly weaker:
\begin{equation*}
    \rhohf
    =
    \int_\mathbb{R} \densityx^2(x) dx
    \leq \rho_1 \int_\mathbb{R} \densityx(x) dx
    = \rho_1
    < \infty.
\end{equation*}

\general{The RDA index further assumes that $\densityx$ is bounded away from $0$, that is, there exists some $\rho_2>0$ such that $\densityx(x) \geq \rho_2$.}
\reviewertwo{Moreover, as described in \cite{zeighami2023distribution}, access to the values of $a$, $b$ and $\rho_1$ is required to build the PCA Index, while the RDA Index requires knowledge of $\rho_1$ and $\rho_2$. This seems like a strong assumption in practice.}
\reviewerone{Finally, the analysis in \cite{zeighami2023distribution} holds less generally than ours because it requires the $\{X_i\}$ to be independent, and it (implicitly) depends on the query parameter $q$ having the same distribution as the $\{X_i\}$.
}

\section{Experimental Findings}
\label{sec:6}
We implement the ESPC index and evaluate the results in light of the theoretical analysis.

\subsection{Data}

In our experiments, we use four datasets from the \textit{Searching on Sorted Data} (SOSD) benchmark \cite{kipf2019sosd,marcus2020benchmarking}, which has become standard in testing learned indexes \cite{sun2023learned}. Each dataset consists of a sorted array of keys, which can be used to build an index and simulate queries. Two of the datasets we use (\textit{usparse}, \textit{normal}) are synthetically generated. The other two (\textit{amzn}, \textit{osm}) come from real-world data. In particular, \textit{osm}
is known to be challenging for learned indexes \cite{marcus2020benchmarking}.

\subsection{Experimental Design}

Our complexity bounds (Theorems \ref{thm:constant-time} and \ref{thm:loglog-time}) are corollaries of Propositions \ref{prop:espc-basic-complexity} and \ref{prop:bound-error-rho}, which are the key results to validate empirically. We do this in the following way. Each dataset consists of a sorted array of $20\times 10^7$ keys. We use a subsample of $n=10^7$ keys, so that experiments are less computationally expensive. For a fixed subsample, we take different values of $K$ and for each:

\begin{enumerate}
    \item We build the ESPC index with $K$ subintervals. Space overhead is estimated by reporting the amount of memory used by the index.
    From Proposition \ref{prop:espc-basic-complexity}, this should be $O(K)$. In terms of experiments, we expect a linear relationship between memory overhead and $K$.
    \item We sample $Q$ keys from $A$. We compute $\rank$ for these keys using the index and measure the prediction error. We then average these values to estimate the expected error.
    \general{From Proposition \ref{prop:bound-error-rho}, this should be smaller than $\frac{3(b-a)}{2}\rhohf\frac{n}{K}$. In practice, we use an estimate $\rhohat$ of $\rhohf$ (see Section \ref{sec:6-rho-analysis}) and all data is re-scaled to the $[0, 1]$ interval, so that we consider $b-a=1$. For experiments, we plot the average prediction error alongside $(3\rhohat n)/(2K)$, as functions of $K$. We expect the average error curve to be below $(3\rhohat n)/(2K)$.}
\end{enumerate}

We use $Q=30 \times 10^6$. For $K$ we use values of $10^3, 5\times 10^3, 10^4, 5\times 10^4, 10^5$ and $2\times 10^5$.

\subsection{Results}
\label{sec:6-results}

In terms of storage, our experiments show a perfect linear relationship between memory overhead and $K$, where $\texttt{memory overhead} = 32K \texttt{ bytes}$. This result is in accordance to the $O(K)$ estimate in Proposition 4.1. The agreement between experiments and theory is to be expected here, because the space overhead of the ESPC index does not depend on probabilistic factors.

On the other hand, Figure \ref{fig:error-experiments-theory} shows how the prediction error changes with the number of subintervals $K$. We plot the average experimental error along with the theoretical bound for the expected error. In accordance with Proposition \ref{prop:bound-error-rho}, this theoretical bound is given by \general{$(3n \rhohat)/(2K)$}. As expected, for all datasets this expression serves as an upper bound for the average experimental error, and both curves have a similar shape. The results show that Proposition \ref{prop:bound-error-rho} has good predictive power, even for real-world data.

\begin{figure}
    \centering
    \includegraphics[width=0.65\linewidth]{
        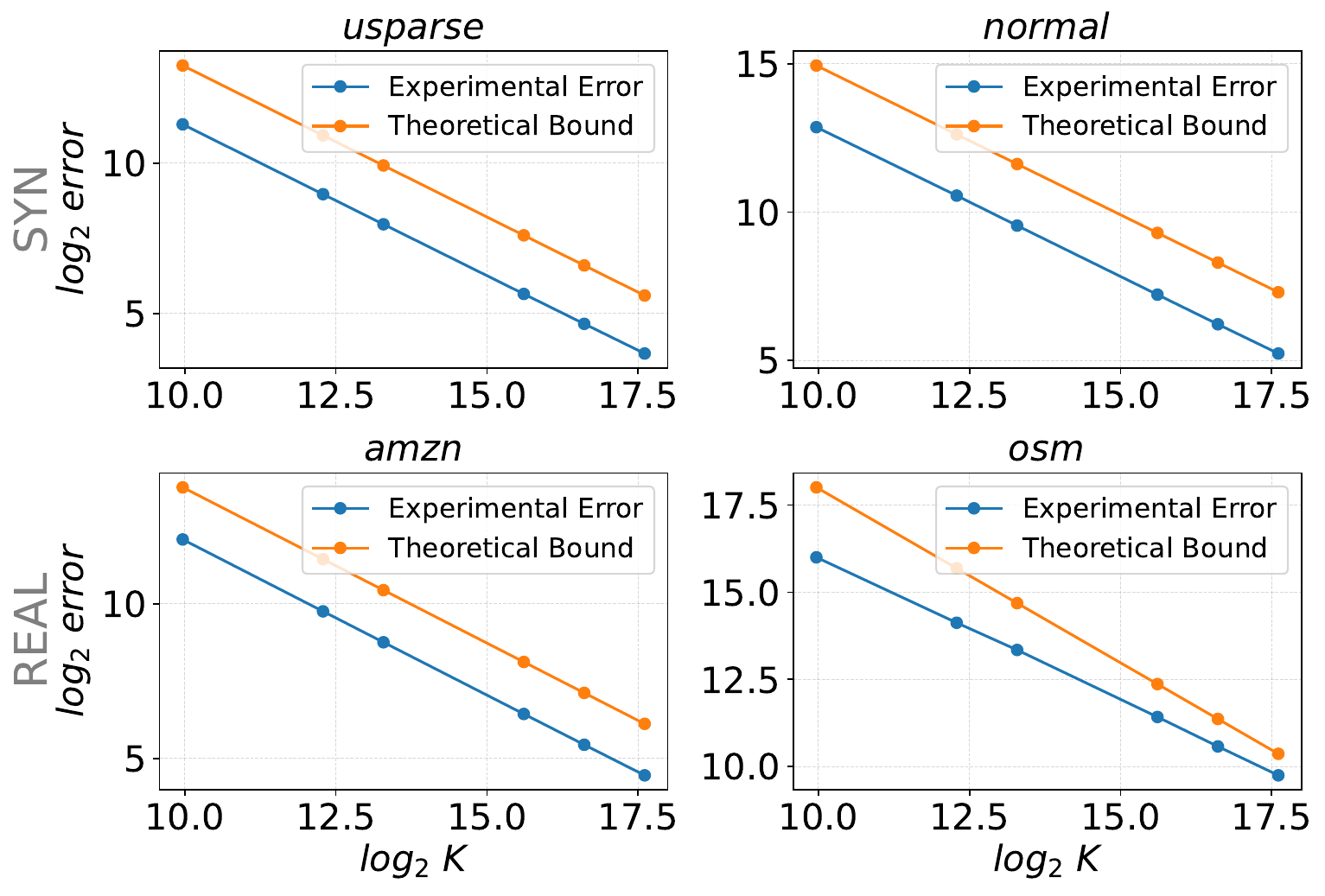
    }
    \caption{Average experimental error and theoretical bound for expected error, as functions of the number $K$ of subintervals. Top row shows synthetic datasets, bottom row shows real-world datasets. The plots use a logarithmic scale for both axes, which prevents clustering of the points.}
    \label{fig:error-experiments-theory}
    \vspace{-12pt}
\end{figure}

\subsection{Analysis of \texorpdfstring{$\rhohf$}{ρf}}
\label{sec:6-rho-analysis}

As seen in Figure \ref{fig:error-experiments-theory}, the experimental error follows the overall shape described by the theoretical bound in Proposition \ref{prop:bound-error-rho}. According to the theory, the constants involved (e.g., the intercept of the curves) should depend on $\rhohf$. Hence, it is crucial to have estimates for this metric. In Appendix \ref{sec:appendix-5} we describe a procedure that can provide an estimate $\rhohat$ of $\rhohf$. The values computed with this method are (\textit{usparse}, 1.20), (\textit{normal}, 3.89), (\textit{amzn}, 1.72), (\textit{osm}, 32.57). The value for \textit{osm} is an order of magnitude greater than the rest, meaning it is farther away from the uniform distribution as measured by the Rényi divergence, helping to explain why it is challenging. The dataset with the lowest value is \textit{usparse}. Since its underlying distribution is uniform, this is in line with the analysis in Section \ref{sec:5-renyi}. Figure \ref{fig:error-experiments-theory} shows that higher values of $\rhohat$ correspond to higher experimental errors.

\section{Conclusions}
\label{sec:7}
Theoretical understanding of learned indexes has not kept up with their practical development. We narrow this gap by proving that learned indexes exhibit strong theoretical guarantees under a very general probabilistic model. In particular, we prove learned indexes can achieve $O(1)$ expected query time with at most linear space. We introduce a specific index that achieves this performance, and we find a general bound for the expected time in terms of the number of keys and the Rényi entropy of the distribution. From this result, we derive a metric $\rhohf$ that can be used to characterize the difficulty of learning the data. We describe a procedure for estimating $\rhohf$ for real data, and we show that the complexity bounds hold in practice. Our results help explain the good experimental performance of learned indexes, under the most general conditions considered so far in the literature. As future work, it is theoretically important to prove lower bounds for expected query time. We believe this type of result may hinge on information-theoretical properties \general{(e.g., through rate-distortion theory \cite{berger2003rate})}, which our analysis shows are relevant for learned indexes. On the other hand, it is also important to extend our analysis to the case of dynamic datasets, where the probability distribution may exhibit drift over time and updates are necessary. This is an important case in applications, and has not been considered in theoretical analysis.

\clearpage
\bibliography{main}

\clearpage
\appendix

\section{Proof of Lemma \ref{lemma:rank-error-bound-number-keys}}
\label{sec:appendix-1}
\begin{proof}[Proof of Lemma \ref{lemma:rank-error-bound-number-keys}]
    Since $q$ belongs to subinterval $I_k=[t_{k-1}, t_k]$, we know that $\rankh(q)=\hat{r}_k$. Also, since $\rank$ is an increasing function, it holds that $\rank(t_{k-1}) \leq \rank(q) \leq \rank(t_{k})$. Subtracting $\hat{r}_k$ from this inequality, we get
    \begin{equation}
    \label{eq:lemma-rank-error-bound-number-keys}
        \rank(t_{k-1}) - \hat{r}_k
        \leq \rank(q) - \hat{r}_k
        \leq \rank(t_{k}) - \hat{r}_k.
    \end{equation}
    As described in Section \ref{sec:3-espc-index}, $\hat{r}_k$ can be written as $\hat{r}_k = \rank(t_{k-1}) + n_k/2$. Hence, it follows that $\rank(t_{k-1}) - \hat{r}_k = -n_k/2$. On the other hand, it can be seen directly from the definition of $\rank$ that $\rank(t_{k})$ is equal to $\rank(t_{k-1}) + n_k$. This means that
    \begin{equation*}
        \rank(t_{k}) - \hat{r}_k
        = \rank(t_{k-1}) + n_k - \hat{r}_k
        = (\rank(t_{k-1}) - \hat{r}_k) + n_k
        = -\frac{n_k}{2} + n_k
        = \frac{n_k}{2}.
    \end{equation*}
    Substituting into equation (\ref{eq:lemma-rank-error-bound-number-keys}), we get the desired result:
    \begin{equation*}
        -\frac{n_k}{2}
        \leq \rank(q) - \hat{r}_k
        \leq \frac{n_k}{2}.
    \end{equation*}
\end{proof}

\section{Proofs of Main Complexity Bounds}
\label{sec:appendix-2}
\begin{proof}[Proof of Theorem \ref{thm:constant-time}]
    Let $n\in\mathbb{N}$ and consider the following procedure $R_n$ for generating an index: given a sorted array $A = [\osx{1}, \ldots, \osx{n}]$, the output $R_n(A)$ consists of an ESPC index built with $K=n$ subintervals. By Proposition \ref{prop:espc-basic-complexity}, this means that space overhead is $S(R_n(A)) = O(K) = O(n)$. As this is independent of the specific values stored in $A$, we have $\spacen = O(n)$. On the other hand, by Theorem \ref{thm:bound-error-K} and Proposition \ref{prop:bound-error-rho} we can bound the expected prediction error by
    \reviewerone{
        \begin{equation*}
            \expec{\eps}
            \leq
            \frac{3(b-a)}{2}\rhohf\frac{n}{n}
            =
            \frac{3(b-a)}{2}\rhohf.
        \end{equation*}
        Furthermore, Jensen's inequality implies
        \begin{equation*}
            \expec{\log \eps}
            \leq \log \expec{\eps}
            \leq \log \big((b-a)\rhohf\big) + \log (3/2)
            \leq \log \big((b-a)\rhohf\big) + 1.
        \end{equation*}
        By Proposition \ref{prop:espc-basic-complexity}, we know query time for $q$ for the ESPC index is $O(\log \eps(q))$. Accordingly, the expected query time is $\timen = O(\expec{\log \eps})=O\big(\log\big((b-a\big)\rhohf\big)+1\big)=O\big(\log\big((b-a\big)\rhohf\big)\big)$. This concludes the proof.
    }
\end{proof}

\begin{proof}[Proof of Theorem \ref{thm:loglog-time}]
    We use the same argument as the proof for Theorem \ref{thm:constant-time}, but setting $K=\frac{n}{\log n}$ subintervals instead of $K=n$.
\end{proof}

\section{Proof of Proposition \ref{prop:bound-error-rho-query}}
\label{sec:appendix-3}
\probanalysis{
    \begin{proof}[Proof of Proposition \ref{prop:bound-error-rho-query}]
        The proof follows closely the argument used to prove Theorem \ref{thm:bound-error-K} and Proposition \ref{prop:bound-error-rho}. By the tower property for conditional expectation,
        \begin{align*}
            \expec{\eps}
            = \expec{\expec{\eps \>|\> X_1, \ldots, X_n}}.
        \end{align*}
        The inner expected value gives
        \begin{equation*}
            \expec{\eps \>|\> X_1, \ldots, X_n}
            = 
            \int_\mathbb{R} \eps(q) \densityq(q) dq
            =
            \int_{\osx{1}}^{\osx{n}} \eps(q) \densityq(q) dq.
        \end{equation*}
        where the last equality comes from the fact that for $q\notin [\osx{1},\osx{n}]$ we know the exact value of $\rank(q)$, which is either $0$ or $n$, and hence we have that $\eps(q)=0$. Replacing $\eps(q)=|\rank(q)-\rankh(q)|$ and noticing that the $\{I_k\}$ form a partition of $[\osx{1}, \osx{n}]$:
        \begin{equation*}
            \expec{\eps \>|\> X_1, \ldots, X_n}
            =
            \sum_{k=1}^K\int_{I_k} |\rank(q)-\rankh(q)| \densityq(q) dq.
        \end{equation*}
        By Lemma \ref{lemma:rank-error-bound-number-keys} we can bound the error within each $I_k$:
        \begin{equation*}
            \expec{\eps \>|\> X_1, \ldots, X_n}
            \leq
            \frac{1}{2}\sum_{k=1}^K n_k \int_{I_k} \densityq(q) dq,
        \end{equation*}
        where $n_k$ is the number of keys in subinterval $I_k$. Now, denote by $n_{k,\ell}$ the number of keys in $I_k \cap J_\ell$. Then, we can write
        \begin{equation}
        \label{eq:conditional-error-in-terms-of-I-g}
            \expec{\eps \>|\> X_1, \ldots, X_n}
            \leq
            \frac{1}{2}\sum_{k=1}^K \left(\sum_{\ell = 1}^K n_{k, \ell}\right) \int_{I_k} \densityq(q) dq
            =
            \frac{1}{2}\sum_{\ell=1}^K \sum_{k = 1}^K n_{k, \ell} \int_{I_k} \densityq(q) dq,
        \end{equation}
        where first we have written $n_k$ as $\sum_{\ell = 1}^K n_{k, \ell}$ and then we have exchanged the order of summation. Now, denote by $L(k)$ the set of $\ell$ indices such that $I_k\cap J_\ell \neq \emptyset$. Notice that $n_{k, \ell}=0$ if $\ell\notin L(k)$. Hence, inequality (\ref{eq:conditional-error-in-terms-of-I-g}) becomes:
        \begin{equation}
        \label{eq:conditional-error-in-terms-of-I-and-J-g}
            \expec{\eps \>|\> X_1, \ldots, X_n}
            \leq
            \frac{1}{2}\sum_{\ell=1}^K \sum_{k: \ell\in L(k)} n_{k, \ell} \int_{I_k} \densityq(q) dq.
        \end{equation}
        By Fact \ref{lemma:intervals-subintervals} we know that $I_k\subseteq J_{\ell-1} \cup J_{\ell} \cup J_{\ell+1}$ for all $\ell$ and $k$ such that $\ell\in L(k)$. Now, define $q_\ell$ as $\mathbb{P}\big(q\in J_\ell\big)$ for each $\ell=1,\ldots, K$, analogously to how the $p_\ell$ are defined as $\mathbb{P}\big(X_i\in J_\ell\big)$. Using this notation, from inequality (\ref{eq:conditional-error-in-terms-of-I-and-J-g}) we get
        \begin{align}
            \expec{\eps \>|\> X_1, \ldots, X_n}
            &\leq
            \frac{1}{2}
            \sum_{\ell=1}^K \sum_{k: \ell\in L(k)} n_{k, \ell} \left(
                \int_{J_{\ell-1}} \densityq(q) dq
                + \int_{J_{\ell}} \densityq(q) dq
                + \int_{J_{\ell+1}} \densityq(q) dq
            \right) \nonumber \\
            &=
            \frac{1}{2} \sum_{\ell=1}^K (q_{\ell-1}+q_{\ell}+q_{\ell+1})
            \sum_{k: \ell\in L(k)} n_{k, \ell} \nonumber \\
            &=
            \frac{1}{2} \sum_{\ell=1}^K (q_{\ell-1}+q_{\ell}+q_{\ell+1}) m_\ell \label{eq:conditional-error-in-terms-of-J-g},
        \end{align}
        where $m_\ell$ denotes the number of keys in $J_\ell$. Now, applying expected value on both sides of inequality (\ref{eq:conditional-error-in-terms-of-J-g}), we get that
        \begin{equation}
        \label{eq:expected-error-in-terms-of-q-m}
            \expec{\eps}
            = \expec{\expec{\eps \>|\> X_1, \ldots, X_n}}
            \leq \frac{1}{2}\sum_{\ell=1}^K (q_{\ell-1}+q_{\ell}+q_{\ell+1}) \expec{m_\ell}
        \end{equation}
        Furthermore, we have
        \begin{equation*}
            \expec{m_\ell}
            = \expec{\sum_{i=1}^n \mathbf{1}_{X_i\in J_\ell}}
            = \sum_{i=1}^n \expec{\mathbf{1}_{X_i\in J_\ell}}
            = \sum_{i=1}^n \prob{X_i\in J_\ell}
            = n p_\ell.
        \end{equation*}
        Replacing $\expec{m_\ell}=n p_\ell$ in inequality (\ref{eq:expected-error-in-terms-of-q-m}), we get:
        \begin{equation}
        \label{eq:expected-error-in-terms-of-p-q}
            \expec{\eps}
            \leq
            \frac{1}{2}\sum_{\ell=1}^K (q_{\ell-1}+q_{\ell}+q_{\ell+1}) \expec{m_\ell}
            =
            \frac{n}{2}\sum_{\ell=1}^K (q_{\ell-1}p_{\ell}+q_{\ell}p_{\ell}+q_{\ell+1}p_{\ell}).
        \end{equation}
        Now, by Cauchy-Schwarz inequality we know that
        \begin{equation*}
            \sum_{\ell=1}^K q_{\ell-1}p_{\ell}
            \leq
            \left(
                \sum_{\ell=1}^K q_{\ell-1}^2\sum_{\ell=1}^K p_{\ell}^2
            \right)^{1/2}
            =
            \left(
                \sum_{\ell=0}^{K-1} q_{\ell}^2\sum_{\ell=1}^K p_{\ell}^2
            \right)^{1/2}
            \leq
            \left(
                \sum_{\ell=1}^K q_{\ell}^2\sum_{\ell=1}^K p_{\ell}^2
            \right)^{1/2},
        \end{equation*}
        where we have used the fact that $q_0=\prob{q\in J_0}=0$ since $J_0$ is defined as the empty set. Similarly, by Cauchy-Schwarz inequality and the fact that $q_{K+1}=0$ we can prove
        \begin{equation*}
            \sum_{\ell=1}^K q_{\ell}p_{\ell}
            \leq
            \left(
                \sum_{\ell=1}^K q_{\ell}^2\sum_{\ell=1}^K p_{\ell}^2
            \right)^{1/2},
            \>\>\>\>\>\>
            \sum_{\ell=1}^K q_{\ell+1}p_{\ell}
            \leq
            \left(
                \sum_{\ell=1}^K q_{\ell}^2\sum_{\ell=1}^K p_{\ell}^2
            \right)^{1/2}.
        \end{equation*}
        Replacing in (\ref{eq:expected-error-in-terms-of-p-q}) we get the new inequality
        \begin{equation*}
            \expec{\eps}
            \leq
            \frac{3n}{2}
            \left(\sum_{\ell=1}^K q_{\ell}^2\right)^{1/2}
            \left(\sum_{\ell=1}^K p_{\ell}^2\right)^{1/2}.
        \end{equation*}
        Now, since both $\densityx$ and $\densityq$ are square-integrable functions, the same argument from Proposition \ref{prop:bound-error-rho} can be used to prove that
        \begin{align*}
            \sum_{\ell=1}^K q_\ell^2
            \leq
            \frac{(b-a)}{K}\rhohg,
            \>\>\>\text{and}\>\>\>
            \sum_{\ell=1}^K p_\ell^2
            \leq
            \frac{(b-a)}{K}\rhohf.
        \end{align*}
        Putting everything together:
        \begin{align*}
            \expec{\eps}
            &\leq
            \frac{3n}{2}
            \left(\sum_{\ell=1}^K q_\ell^2\right)^{1/2}
            \left(\sum_{\ell=1}^K p_\ell^2\right)^{1/2} \\
            &\leq
            \frac{3n}{2}
            \left(\frac{(b-a)}{K}\rhohg\right)^{1/2}
            \left(\frac{(b-a)}{K}\rhohf\right)^{1/2} \\
            &=
            \frac{3(b-a)}{2}\sqrt{\rhohf \rhohg}\frac{n}{K}.
        \end{align*}
        This concludes the proof.
    \end{proof}
}

\reviewerone{
    \section{Proof of Constant Time Results for Cases of Unbounded Support}
    \label{sec:appendix-4}
    \begin{proof}[Proof of Theorem \ref{thm:constant-time-unbounded-1}]
    Set $a=\mu - \sigma\sqrt{n\log n}$ and $b=\mu + \sigma\sqrt{n\log n}$. It holds that
    \begin{align*}
        \prob{A_{a, b}}
        &=
        \prob{[\osx{1}, \osx{n}]\not\subseteq [a,b]} \\
        &=
        \prob{\bigcup_{i=1}^n \big\{X_i\notin [a,b]\big\}}
    \intertext{By the union bound:}
        &\leq
        \sum_{i=1}^n \prob{X_i\notin [a,b]}
    \intertext{By definition of $a$ and $b$:}
        &=
        \sum_{i=1}^n \prob{
            \Big\{X_i < \mu - \sigma\sqrt{n\log n}\Big\}
            \cup
            \Big\{X_i > \mu + \sigma\sqrt{n\log n}\Big\}
        } \\
        &=
        \sum_{i=1}^n \prob{|X_i - \mu| > \sigma\sqrt{n\log n}}
    \intertext{By Chebyshev's inequality:}
        &\leq
        \sum_{i=1}^n \frac{1}{n\log n}
        =
        \frac{1}{\log n}.
    \end{align*}
    Since $\prob{A_{a, b}}\leq \frac{1}{\log n}$ equation (\ref{eq:expected-log-error-rho-n-K}) applies, and replacing $a$ and $b$ gives
    \begin{equation*}
        O(\expec{\log \eps})
        =
        O\left(\log \left((b-a)\rhohf\frac{n}{K}\right)\right)
        =
        O\left(\log \left(2\sigma\sqrt{n\log n}\rhohf\frac{n}{K}\right)\right).
    \end{equation*}
    Hence, building an ESPC index with $K=n\sqrt{n\log n}$ intervals makes space usage $O\Big(n\sqrt{n\log n}\Big)$ and guarantees that expected query time is $O\left(\log ( 2 \sigma \rhohf ) \right)$. This concludes the proof.
\end{proof}

\begin{proof}[Proof of Theorem \ref{thm:constant-time-unbounded-2}]
    The distribution of the $\{X_i\}$ is subexponential, so there is a constant $C>0$ such that $\prob{|X_i| \geq x} \leq 2e^{-Cx}$ for all $x$. Set $a=-\frac{2\log n}{C}$, $b=\frac{2\log n}{C}$. It holds that
    \begin{align*}
        \prob{A_{a, b}}
        &=
        \prob{[\osx{1}, \osx{n}]\not\subseteq [a,b]} \\
        &=
        \prob{\bigcup_{i=1}^n \big\{X_i\notin [a,b]\big\}}
    \intertext{By the union bound:}
        &\leq
        \sum_{i=1}^n \prob{X_i\notin [a,b]}
    \intertext{By definition of $a$ and $b$:}
        &=
        \sum_{i=1}^n \prob{
            \left\{X_i < -\frac{2\log n}{C}\right\}
            \cup
            \left\{X_i > \frac{2\log n}{C}\right\}
        } \\
        &=
        \sum_{i=1}^n \prob{|X_i| > \frac{2\log n}{C}}
    \intertext{Because the distribution is subexponential:}
        &\leq
        \sum_{i=1}^n 2e^{-C\frac{2\log n}{C}} \\
        &=
        \frac{2}{n}
        \leq
        \frac{1}{\log n},
    \end{align*}
    for all $n\geq 2$. Since $\prob{A_{a, b}}\leq \frac{1}{\log n}$ equation (\ref{eq:expected-log-error-rho-n-K}) applies, and replacing $a$ and $b$ gives
    \begin{equation*}
        O(\expec{\log \eps})
        =
        O\left(\log \left((b-a)\rhohf\frac{n}{K}\right)\right)
        =
        O\left(\log \left(\frac{4\log n}{C}\rhohf\frac{n}{K}\right)\right).
    \end{equation*}
    Hence, building an ESPC index with $K=n \log n$ subintervals makes space usage $O(n\log n)$ and guarantees that expected query time is $O\left(\log \big( 4 \rhohf / C \big) \right)$. This concludes the proof.
\end{proof}

}

\section{Estimation of \texorpdfstring{$\rhohf$}{ρf} metric}
\label{sec:appendix-5}
As the key measure of statistical complexity for the data generating process, $\rhohf$ can help explain the experimental differences in the performance of the ESPC index for different datasets. In that sense, it is important to have an estimate for this metric. Notice that
\begin{equation*}
    \rhohf
    = \int_\mathbb{R} \densityx^2(x)dx
    = \int_\mathbb{R} \densityx(x) \densityx(x)dx
    = \expec{\densityx},
\end{equation*}
that is, $\rhohf$ is equal to the expected value of the density function. In other words, if we can
\begin{enumerate}
    \item\label{step:monte-carlo-1} Sample $J$ values $z_1,\ldots,z_J$ from the probability distribution defined by $\densityx$, and
    \item\label{step:monte-carlo-2} Evaluate $\densityx(z_j)$ for all $j\in\{1,\ldots,J\}$,
\end{enumerate}
then we can get an unbiased estimator for $\rhohf$ via
\begin{equation*}
    \rhohat
    =
    \frac{1}{J}\sum_{j=1}^J \densityx(z_j).
\end{equation*}
If the $\{z_j\}$ are sampled independently, this corresponds to a Monte Carlo estimator and we know its variance goes to $0$ as $J\rightarrow\infty$.

We can adapt this procedure for real datasets. On the one hand, Step (\ref{step:monte-carlo-1}) can be simulated by sampling keys at random from array $A$. On the other hand, since $\densityx$ is usually not known, for Step (\ref{step:monte-carlo-2}) we use an estimate $\hat{\densityx}$ of the density $\densityx$. This estimate can be derived via several methods (e.g., kernel density estimation \cite{scott2015multivariate}) with different properties in terms of mean squared or uniform error.

We mostly use the histogram method, which under mild assumptions can be shown to have vanishing bias as the bin width goes to $0$ \cite{freedman1981histogram}. For this method, the mean squared error is minimized with $\Omega(n^{-1/3})$ bin width, where $n$ is the number of keys. One robust option in practice is to choose the bin width using the Freedman–Diaconis rule \cite{freedman1981histogram}. For most datasets, generated from either synthetic (\textit{usparse}, \textit{normal}) or real-world data (\textit{amzn}), this method exhibits the convergence expected from Monte Carlo estimation as the number of samples increases. Figure \ref{fig:rho-estimate-monte-carlo} exemplifies this with the \textit{normal} and \textit{amzn} datasets.

\begin{figure}[ht]
    \centering
    \includegraphics[width=0.8\linewidth]{
        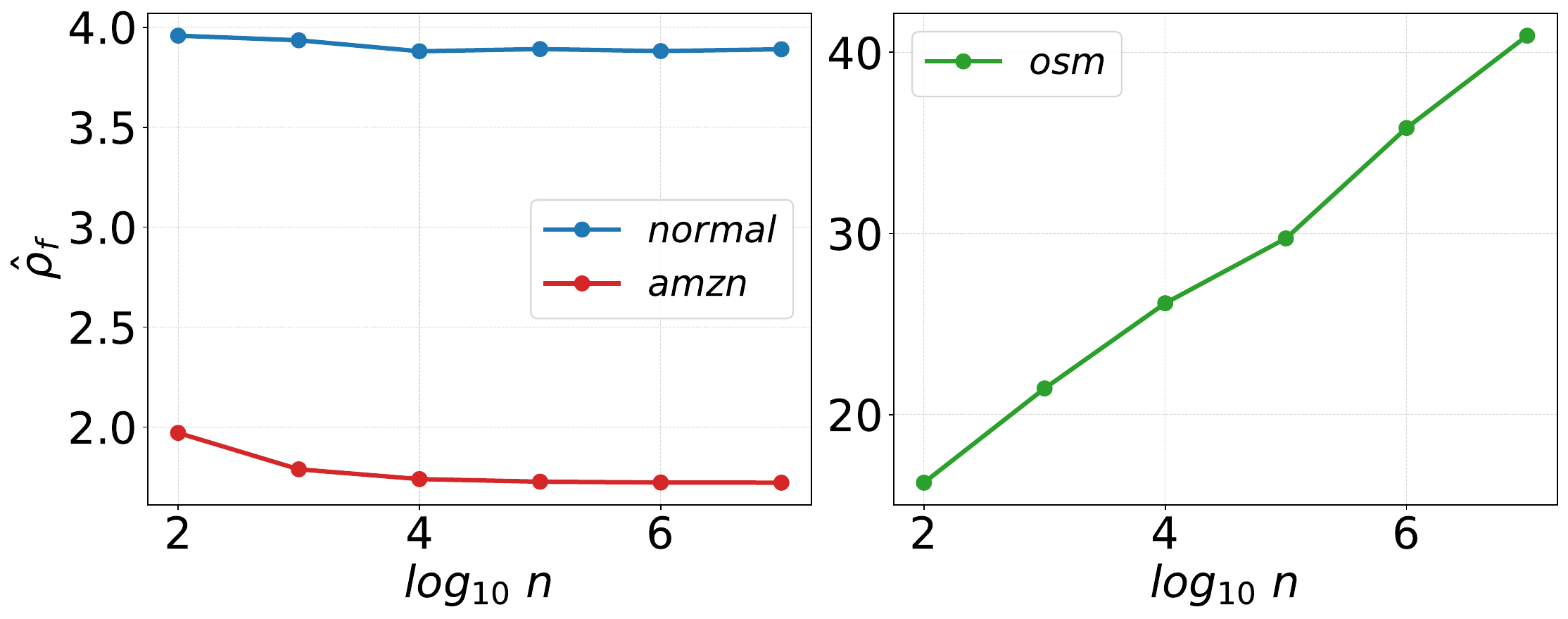
    }
    \caption{$\rhohat$ estimate for \textit{normal}, \textit{amzn} and \textit{osm} datasets as number of samples $n$ increases. The estimation process uses the histogram method to approximate the density $\densityx$.}
    \label{fig:rho-estimate-monte-carlo}
\end{figure}

For the \textit{osm} dataset, this convergent behavior is not observed, as Figure \ref{fig:rho-estimate-monte-carlo} shows. It may be that the variance of the estimator is higher and more samples are needed to get an accurate estimate. In light of this, we calculate $\rhohat$ by using a kernel density estimator for the density $\densityx$, which seems to be more robust than the histogram estimator for this dataset.

Table \ref{tab:rho-estimates} shows the estimates $\rhohat$ obtained through the use of our method. As can be seen, most datasets exhibit low values for this metric, consistent with the good performance that learned indexes have in general. Among these datasets, \textit{osm} has the highest value of $\rhohat$, consistent with the fact that it is considered a challenging benchmark for learned indexes.

\begin{table}[ht]
    \centering
    \begin{tabular}{|r r|} 
         \hline
         \textbf{Dataset} & $\rhohat$ \\
         \hline
            \textit{usparse}   & $1.20$             \\
            \textit{normal}    & $3.89$             \\
            \textit{amzn}      & $1.72$             \\
            \textit{osm}       & $32.57$            \\ [1ex]
         \hline
    \end{tabular}
    \caption{Estimate $\rhohat$ for each dataset used in our experiments. Most have low values, consistent with good performance of learned indexes. The \textit{osm} dataset exhibits a high value of $\rhohat$, an order of magnitude greater than the rest, helping to explain why it is considered challenging.}
    \label{tab:rho-estimates}
\end{table}

\end{document}